\newtheorem{theorem}{Theorem}
\newtheorem{definition}{Definition}
\newtheorem{proposition}{Proposition}
\newtheorem{lemma}{Lemma}
\newtheorem{corollary}{Corollary}
\newtheorem{example}{Example}%}
\newtheorem{remark}{Remark}
\newcommand{\bma}{\left[\begin{matrix}}
	\newcommand{\ema}{\end{matrix}\right]}
\newcommand{\ba}{\begin{array}}
\newcommand{\ea}{\end{array}}
\newcommand{\be}{\begin{equation}}
\newcommand{\ee}{\end{equation}}
\newcommand{\mc}{\mathcal}
\def\1{\mathds{1}}%{\boldsymbol{1}}
\def\0{\boldsymbol{0}}
\newcommand{\R}{\mathbb{R}}
\newcommand{\N}{\mathbb{N}}
\newcommand{\X}{\mathcal{X}}
\newcommand{\blue}{\color{black}}
\def\N{\mathbb{N}}
\def\R{\mathbb{R}}
\def\diag{{\rm diag}\,}
\tikzstyle{v_c}=[circle, draw,inner sep=2pt, minimum width=12pt, color=blue]
\tikzstyle{v_a}=[circle, draw,inner sep=2pt, minimum width=12pt, color=red]
\tikzstyle{edge} = [draw,thick,-,font=\small ]
\tikzstyle{label} = [draw,fill=black,font=\normalsize]
\def\I{{\mathcal I}}
\def\BibTeX{{\rm B\kern-.05em{\sc i\kern-.025em b}\kern-.08em
	T\kern-.1667em\lower.7ex\hbox{E}\kern-.125emX}}
\newcommand{\Yinterp}{Y}
\newcommand{\Xinterp}{Z}
\newcommand{\xinterp}{z}
\newcommand{\yinterp}{y}
\newcommand{\comp}{\hat{\blockbound}}
\newcommand{\pcomp}{\hat{\blockbound}}
\newcommand{\bound}{G}
\newcommand{\blockbound}{\Gamma}
\newcommand{\boundcomp}{\widehat\bound}
\newcommand{\Dcomp}{\widehat D}
\newcommand{\Acomp}{\widehat A}
\newcommand{\ccomp}{\widehat c}
\newcommand{\Qcomp}{\widehat Q}
\newcommand{\energy}{P}
\newcommand{\noisev}{v}
\newcommand{\x}{x}
\newcommand{\xplus}{x_+}
\newcommand{\datax}{X} 
\newcommand{\dataxplus}{X_+} 
\newcommand{\datau}{U} 
\newcommand{\noisem}{V} 
\newcommand{\noiseplus}{\bma \noisev &  \noisem \ema} 
\newcommand{\normtwo}[1]{\sigma_\text{max}(#1)}
\newcommand{\Aset}{\mc L_L}
\newcommand{\Abound}{L^2 I_n}
\title{\LARGE \bf 
Interpolation Conditions for Data Consistency and Prediction in Noisy Linear Systems
	%Interpolation Conditions for Data-Based Verification and Prediction in Linear Systems
	%Integrating Data and Prior Knowledge via Interpolation Conditions for Verification and Prediction in Linear Systems
	%Data-Driven Verification and Prediction for linear Systems via Interpolation Conditions
	
	 %Data-Driven Analysis and Prediction for Uncertain LTI Systems via Interpolation Conditions
%Physics-informed data-driven control without persistence of excitation %{\color{red}(or with few measurements?)}
}
\author{Martina Vanelli, Nima Monshizadeh, Julien M. Hendrickx% <-this % stops a space
	\thanks{This work was supported by ARC via the SIDDARTA project. }% <-this % stops a space
	\thanks{M. Vanelli and J. M. Hendrickx are with the ICTEAM Institute, UCLouvain, B-1348, Louvain-la-Neuve, Belgium	
	(email:	{\tt \footnotesize	\{martina.vanelli, julien.hendrickx\}@uclouvain.be}). N. Monshizadeh is with the Engineering and Technology Institute
	Groningen, University of Groningen, Nijenborgh 4, 9747 AG
	Groningen, The Netherlands. Email: 
		{\tt \footnotesize n.monshizadeh@rug.nl.}}
}
\begin{document}

	\maketitle
	\thispagestyle{empty}
	\pagestyle{empty}

	%%%%%%%%%%%%%%%%%%%%%%%%%%%%%%%%%%%%%%%%%%%%%%%%%%%%%%%%%%%%%%%%%%%%%%%%%%%%%%%%
 %   \nmargin{title: Interpolation Conditions for Data Consistency and Prediction in Noisy Linear Systems? (no preference)}
	\begin{abstract}
	%We show that data that is not sufficient informative to allow for system identification can still provide	meaningful information when combined with external or physical knowledge of the system. To this end, w
    We develop an interpolation-based framework for noisy linear systems with unknown system matrix with bounded norm (implying bounded growth or non-increasing energy), and bounded process noise energy. The proposed approach characterizes all trajectories consistent with the measured data and these prior bounds in a purely data-driven manner. This characterization enables data-consistency verification, inference, and one-step-ahead prediction, which can be leveraged for safety verification and cost minimization.  Ultimately, this work represents a preliminary step toward exploiting interpolation conditions in data-driven control, offering a systematic way to characterize trajectories consistent with a dynamical system within a given class and enabling their use in control design. 	
\end{abstract}
\begin{keywords}
	Data-driven control, Linear time-invariant systems, Interpolation conditions 
\end{keywords}
	%%%%%%%%%%%%%%%%%%%%%%%%%%%%%%%%%%%%%%%%%%%%%%%%%%%%%%%%%%%%%%%%%%%%%%%%%%%%%%%%
	\section{Introduction}
Data-driven control has become a crucial aspect of modern control theory, offering powerful tools for system analysis and design \cite{markovsky2021behavioral}.  Many existing approaches rely on the assumption of persistence of excitation \cite{willems2005note}, which, in the absence of noise, ensures that the underlying dynamical system can be uniquely identified from the data. Under this assumption, data-driven methods often enable bypassing the explicit identification of the system matrix, although the data itself would still allow it \cite{de2019formulas, van2020willems, berberich2020data}. % {\color{red} (add citations)}. %The assumption of persistence of excitation is a quite a strong assumption that is not always needed for data informativity \cite{van2020data}. 
%However, persistence of excitation is not always satisfied or even necessary \cite{van2020data}. %. In many realistic scenarios, 
{\blue In practice,  data are often affected by noise, preventing unique identification of the system parameters. Recent research has addressed such challenges under bounded-noise assumptions, for example through the S-Lemma \cite{van2020noisy}, Petersen’s Lemma \cite{bisoffi2022data}, %through updating uncertanties \cite{berberich2022combining},
 and bounds on measurement errors \cite{li2026controller}. These approaches characterize the set of all models consistent with both the available data and the prescribed noise bounds, enabling %These methods infer the set of all models that are consistent with both the data and the bound and derive robust results for %quantify what can be inferred from imperfect data and its implications for control applications, including 
robust guarantees for stability, safety \cite{ modares2023data} and  linear quadratic regulator (LQR) design \cite{zheng2025robust, zhao2025data}.
 However, the persistence of excitation condition is not always satisfied or even necessary \cite{van2020data}. Recent work has explored combining data with prior knowledge of a nominal model\cite{niknejad2023physics, berberich2022combining} or positive system \cite{miller2023data} to relax this assumption and improve robustness.}

%Prior knowledge + data 
% \nmargin{Check these authors to see any related works: Alsati/Muller, Sznair, Lazar,..}
%In this work, we contribute to this line of research by proposing a framework that aims to extract the maximum possible information from data regardless of its quantity.  In particular, our goal is to verify data consistency and predict all possible future states when the system parameters cannot be uniquely identified due to the presence of noise and/or insufficient data. In the absence of additional assumptions, the set of possible future states	could span all the state space. %, when the system dynamics cannot be uniquely identified due to the presence of noise and/or insufficient data. %This setting is particularly relevant when data are combined with prior physical knowledge. 
%In particular, our goal is to verify data consistency and predict all possible future states when A cannot be uniquely identified due to the presence of noise and/or insufficient data. In the absence of additional assumptions, the set of possible future states could span all $\R^n$. 
We contribute to this line of research by proposing a general framework that exploits available structure and prior knowledge to extract the maximum possible information from available data, regardless of its quantity or quality. %Specifically, by leveraging prior knowledge, we verify data consistency constrain the set of feasible future states within a meaningful subset of the state space.
In this initial study, we consider noisy linear time-invariant (LTI) systems with a known input matrix $B$ (e.g., representing a known actuator configuration) and an unknown state-transition matrix $A$. We assume a bound on the system norm, which corresponds to a bound on energy increase. In particular, when this bound equals one, the system exhibits non-decreasing energy and thus is stable. Additionally, the process noise is assumed to have a bounded energy. {\blue Notably, we differ from prior work by not assuming any minimal amount of data nor the knowledge of a nominal model.} %Importantly, our approach differs from prior work by not assuming access to a specific amount of data or prior knowledge of a nominal model.} %{\blue Prior knowledge + data \cite{niknejad2023physics, berberich2022combining} but not nominal model. }
 %{\blue Importantly, we differ from most prior work by not assuming persistence of excitation and, unlike \cite{niknejad2023physics, berberich2022combining}, we do not assume a the knowledge of a nominal model.}

   \begin{figure}
	\centering
    \includegraphics[width=0.22\textwidth]{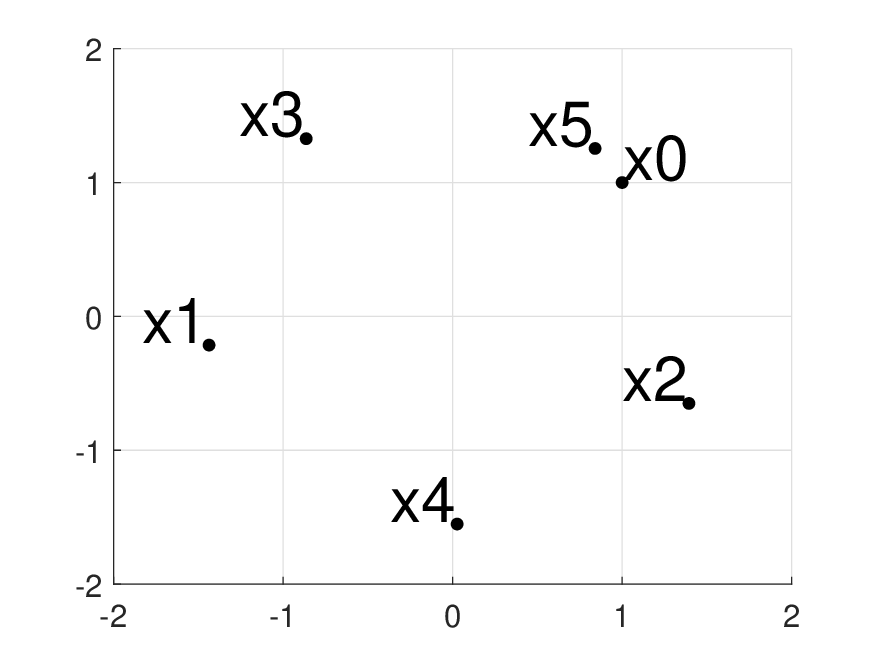}
\includegraphics[width=0.22\textwidth]{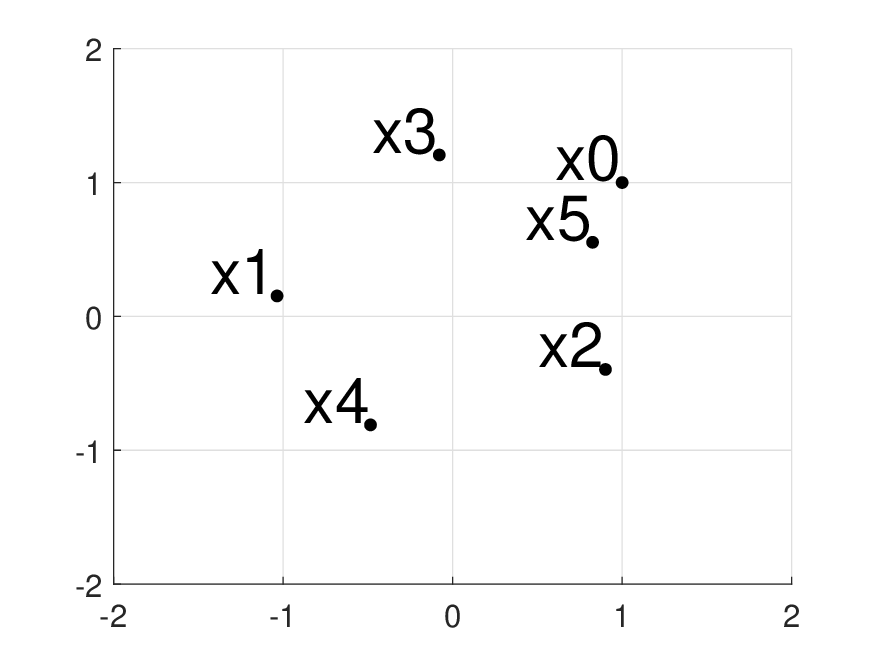}
	\caption{Collected data-points.}%\caption{See Example \ref{ex:ver}}
   \label{fig:datapoints}
\end{figure}
%In this setting, one may naturally formulate several data-driven queries:
 {\blue  As a pedagogical example, consider the data-points in Fig. \ref{fig:datapoints} collected from two systems: on the left, a noisy system that does not preserve energy; on the right, a system with strictly decreasing energy and a higher noise level.} %. The ones on the left have been collected from a system that does not preserve energy perturbed with some small level of noise. %, On the right, the data are collected from a system with strictly decreasing energy and a higher level of noise.}
%Our aim in this case is to distinguish between these two fundamental features of the systems.
%In this setting, 
Our goal is to address data-driven questions such as determining whether the observed data are compatible with given noise and energy bounds, identifying which parameter combinations explain the trajectories, and characterizing the set of feasible future states and optimizing over it.
%{\blue In this setting, our aim is to determine whether the observed data are compatible with given noise and stability bounds, identify which parameter combinations explain the trajectories, characterize the set of feasible future states and optimizing over it. } %tsiamis2023statistical,
{\blue Unlike the set membership estimation which seeks all models consistent with data and bounded uncertainty \cite{ lauricella2020set, li2024learning, xu2025sample},} % %\cite{althoff2021set} and set membership \cite{fogel1979system,  cerone2011set, ozay2015set} \cite{blanchini2008set}. }
our aim is to provide a systematic way to address such queries directly from data without explicitly identifying all admissible system matrices or noise realizations. 
This data-driven approach paves the way for extending the framework to the nonlinear setting, where explicit parametrizations of the dynamics are either unavailable or intractable.
 %Instead of explicitly identifying all admissible system matrices or noise realizations, we rely on interpolation conditions that allow these questions to be answered in a purely data-based manner. %The upcoming analysis is based on algebraic results that determine tight conditions for the existence of linear bounded operators that interpolate the data. Such results belong to the broader class of
 
To achieve this goal, we rely on
 \textit{interpolation conditions}, that is, necessary and sufficient conditions on a set of points that guarantee the existence of an interpolating function (or operator) belonging to a certain class (e.g. L-smooth convex functions \cite{taylor2017smooth}). The use of interpolation conditions has led to a novel type of analysis in optimization, enabling to derive exact worst-case performances of algorithms.% Such conditions have recently enabled exact worst-case performance analyses in optimization, and here we take a first step toward applying them in data-driven control to characterize trajectories consistent with a given system class. 
 This work is a preliminary step in the direction of exploiting interpolation conditions in data-driven control to characterize the set of trajectories that are consistent with a dynamical system in a given class.

Our main contributions are the following.
First, we derive suitable interpolation conditions for our setting. By means of a matrix-completion argument, we extend a classical algebraic result for bounded linear operators to the case of linear operators composed of multiple bounded components. %in particular, a classical algebraic result for bounded linear operators to linear operators with multiple bounded components by means of a matrix-completion argument. 
%from a single bound to multiple bounds by means of a matrix-completion argument. %This generalization is an algebraic building block that enables data-driveRn inference under several simultaneous constraints.
%	\item \textit{Applications to control:}
%
Then, we exploit its application for LTI systems.
%	\begin{itemize}
%		\item 
\subsubsection{Data consistency and inference} using the generalized interpolation conditions, we formulate semidefinite programs (SDP) that verify whether a dataset is compatible with assumed noise and  energy amplification (or decrease) bounds and that compute minimal admissible bounds.
%		\item \textit{One-step-ahead prediction, safety, and energy minimization:} 
\subsubsection{One-step-ahead prediction, safety, and energy minimization} We apply the algebraic result to characterize the set of all next states consistent with past data and prior assumptions. %and we use 
		This characterization can be used for safety verification   and worst-case (energy) minimization. %Because this characterization is 
		Since our formulation is not an SDP in general, we derive an explicit description of the feasible set: in the noise-free case it reduces to a single ellipsoid determined by the data and prior bounds; with process noise it becomes a \textit{union of ellipsoids}, whose parameters we explicitly derive.
%	\end{itemize}

 An important advantage of our preliminary results is that they do not require any assumptions on the amount of data. Indeed, our aim is to exploit the maximal information we can obtain from any amount of data and investigate how this information can be combined with the physical knowledge of the system to perform effective control.
The rest of the paper is organized as follows. %We conclude this section by introducing some notation. We then outline t
We outline the problem setting in Section \ref{sec:1}, we present the interpolation conditions in Section \ref{sec:2} and  their applications to LTI systems in Section \ref{sec:3}. 
We conclude with Section  \ref{sec:4}, summarizing the results and outlining future research directions.

	\subsection{Notation}
For a matrix $M$ in $\R^{n\times m}$, we denote with $M^{\dagger}$ in $\R^{m \times n}$ its \textit{Moore-Penrose inverse} and we refer to it as the \textit{pseudo-inverse} of $M$. %We recall that $M^{\dagger}$ satisfies $	MM^{\dagger}M=M$ and is unique. % (see the Appendix for further details). 
% We  denote with $M^\top$ the transpose of $M$ 
We denote with 
$\diag(M_1,\dots, M_n)$ a block diagonal matrix with diagonal blocks $M_1$, ..., $M_n$, with $I_n$ the identity matrix of size $n$, and with $0_{m\times n}$ the zero matrix of size $m\times n$. The largest and smallest singular values of a
 matrix $M$ are $\sigma_\text{max}(M)$ and $\sigma_\text{min}(M)$. The induced $2$-norm of a matrix $M$ is $\|M\|$ and is
 equivalent to $\sigma_\text{max}(M)$. We denote with
 $\mc L_{L}=\{M\mid \normtwo{M}\leq L\}$
the set of matrices whose norm is bounded by $L\geq 0$. 
For a symmetric matrix  $\tiny \bma M & N\\N^\top &O\ema$, we use the shorthand writing $   \tiny \bma M & N\\ \ast &O\ema$.
	\section{Problem setting}\label{sec:1}
We consider the discrete-time  linear time-invariant (LTI) system
	\be\label{eq:lin_sys}
x_{k+1}=Ax_k+Bu_k+v_k\,,
\ee
where $x_k\in \R^{n}$ is the state vector, $u_k \in \R^{m}$ is the input vector, and $v_k$ is the process noise. % %$A\in \R^{n\times n}$ and $B\in \R^{n\times m}$
%In this preliminary work, 
We assume that the input matrix $B$ is known, while the state-transition matrix $A$ is unknown. % {\color{red} This assumption, that we intent to relax in the journal version of the paper, could be justify by the fact that we know the actuator setup but not the internal dynamics (or choosing $B$ as we want to simplify the analysis? Canonical form? Maybe not consistent with the assumptions on $A$.) 
The assumption on the input matrix, that we intent to relax in future work, is satisfied when the actuator configuration is known, whereas the internal dynamics of the system is not.	From a methodological viewpoint, fixing 	$B$ allows us to isolate and analyze the uncertainty on $A$ and the impact of noise. %and data.	%This simplification will be relaxed in future work.} %We assume that $B$ we do not have access to the matrix $A$ of its state equation and we rely on input-state measurements, that is, $x_0, \dots, x_t$ and  $u_0, \dots, u_{t-1}$ for some $t>0$. In this preliminary work, we assume that $B$ is known

We also assume to have access to  \textit{input-state measurements}, that is, $x_0, \dots, x_t$ and  $u_0, \dots, u_{t-1}$ for some $t>0$. Importantly, we make no assumption on the number or richness of the data and the number of samples $t$ may even be smaller than the system dimension $n$.
%	Our analysis is therefore designed to handle cases with very limited information.
Let us define the following data matrices:
\be\label{eq:data}
\begin{aligned}
	\datax&=\bma x_{t-1}&\dots& x_0\ema\\
	\datau&=\bma u_{t-1}&\dots& u_0\ema \\
	\dataxplus&=\bma x_{t}& \dots& x_1\ema\,.
\end{aligned}
\ee
From \eqref{eq:lin_sys}, the data satisfy
\be\label{eq:matrix_form}
\dataxplus-B\datau=\bma A&\noisem\ema \bma \datax\\ I_t\ema\,,
\ee 
where %\be\label{eq:noise}
$\noisem=\bma v_{t-1}&\dots& v_0\ema$
%\ee
denotes the noise matrix.  Since $B$ and the data matrices $(\datax, \dataxplus,\datau )$ are known, 
the only unknowns in \eqref{eq:matrix_form} are $A$ and $\noisem$. 
%Our goal is to verify data consistency and predict all possible future states when $A$ cannot be uniquely identified due to the presence of noise and/or insufficient data. In the absence of additional assumptions, the set of possible future states	could span all $\R^n$. To derive meaningful results,  we combine the data with prior knowledge.

\subsection{Prior knowledge on system dynamics} %We assume to know that the energy is decreasing provided that we are in the right basis.	Suppose that $A$ is such that the energy $x^T\mc Qx$ for $\mc Q\succ 0$ is non-increasing when there is no input, that is,
We assume prior knowledge that the system matrix norm is bounded, i.e.,%\nmargin{Shouldn't be $A\in ..$?}
%by $L$, that is, $A$ belongs to the set 
\be\label{eq:A_set}
%\begin{aligned}
	A\in 
\Aset := \{M : \sigma_\text{max}(M) \leq  L\}=\{M : M^\top M\preceq \Abound\}\,,
%\end{aligned}
\ee
where %$\sigma_{\max}(M)$ denotes the largest singular value of $M$ and 
$L>0$ is the \emph{energy amplification bound} of the system  in the absence of inputs or noise. Equivalently, we can express this condition in terms of a quadratic energy function. Assume there exists a  \emph{known} symmetric positive definite matrix $\energy=\energy^\top \succ 0$ such that
\begin{equation}\label{eq:energy}
	(A x)^\top \energy (A x) \le L^2x^\top \energy x%+(L^2-1)x^\top \energy x, 
    \quad \forall x \in \mathbb{R}^n.
\end{equation}
%\nmargin{why separating  the two terms in rhs of (5)?}
Factorizing $P = R^\top R$ for some invertible matrix $R$ and defining the transformed coordinates $\tilde{x} = R x$ and $\tilde{A} = R A R^{-1}$, this condition becomes
\begin{equation}\label{eq:cv}	
(\tilde{A} \tilde{x})^\top (\tilde{A} \tilde{x}) \le L^2 \tilde{x}^\top \tilde{x}, \quad \forall \tilde{x} \in \mathbb{R}^n\,.
\end{equation}
or equivalently $\tilde A$ in $\Aset$. We remark that this transformation can be applied for a \emph{known} $P$. %and it can be used to test several given $P$. \nmargin{What "test several given $P$"means?}
%In future work, we aim to generalize the framework to verify the consistency of the data with the existence of a suitable $\energy$, without explicitly knowing it or testing all possibilities.
%Then, when $\energy$ is \emph{known},  we can simply assume, w.l.o.g., that $A$  is in $\Aset$. %is bounded by $L$.
%\begin{equation}
%	\sigma_{\max}(\tilde{A}) \le L.
%\end{equation}

The value of $L$ characterizes three distinct regimes, that can be observed clearly in \eqref{eq:energy}. For $L<1$, the internal system exhibits strictly decreasing energy and the system is exponentially stable. In particular, the condition $\normtwo{A}< 1$ implies that the system is strictly contractive with contraction rate $\normtwo{A}$. %and that the trajectories conver%distances between trajectories shrink exponentially. %
% the distance between any two trajectories shrinks at each step,
% implying exponential convergence of trajectories towards each other. 
For $L=1$, the system exhibits non-decreasing energy. In particular, \eqref{eq:energy} implies that the internal dynamics is  Lyapunov stable with respect to the Lyapunov function $\mc V(x) = x^\top P x$. For $L>1$, the system may amplify the state, but only in a controlled manner, which prevents unbounded instantaneous amplification of it. %of the state.  %preventing unbounded instantaneous growth.

\subsection{Prior knowledge on the noise}
%Following some recent work 
%Building on prior work (e.g., \cite{bisoffi2024controller} {\red add \cite{}}), 
While the noise sequence is unknown, we assume that its \emph{total energy} admits a deterministic upper bound (see e.g. \cite{van2020data,bisoffi2022data, berberich2022combining}): %S-Lemma paper, Petersen's lemma, Berberic (Combining..)} 
%on the \emph{total noise energy}:
\be \label{eq:bound_noise}\noisem\noisem^\top=\sum_{k=0}^{t-1}v_kv_k^\top \preceq \Theta \,.\ee 
 In particular, we focus on the case \be \label{eq:theta}\Theta =\alpha^2t  I_n\,. \ee %which captures the situations where 
This bound is automatically satisfied when all individual noise vectors are instantaneously bounded, i.e., $\|v_k\| \leq \alpha$, for all $k$. 
%Indeed, for all $x$, 	 $$x^\top (v_kv_k^\top) x=(v_k^\top x)^2\leq  \|v_k\|^2\|x\|^2\leq x^\top (\alpha^2)x$$	 which implies $v_kv_k^\top\preceq \alpha^2 I_n$ for all $k$. Summing this inequalities for  $k=0,\dots, t-1$ gives \eqref{eq:bound_noise}. 
	 However, the converse is not generally true: the total energy bound does not exclude degenerate cases where, for example, all the noise energy is concentrated in a single time step  (e.g., $\|v_j\| = \alpha\sqrt{t}$  for some $j$ and $\|v_i\| = 0$ for $i \neq j$).  
Thanks to the structure of the noise bound \eqref{eq:theta}, we find that \eqref{eq:bound_noise} corresponds to 
$\|V\|= \sigma_\text{max}(\noisem^\top )=\sigma_\text{max}(\noisem)\leq \alpha \sqrt{t}$.  
%Then, by the property of the norm of $V^\top$ and its transpose, we find $$\sigma_\text{max}(\noisem)=\sigma_\text{max}(\noisem^\top)\leq \alpha\sqrt{t}\,.$$ 
Thus, the admissible noise matrices belong to the set
\be\label{eq:V_set}
%V \in 
\mc L_{\alpha\sqrt{t}} := \{M: \sigma_\text{max}(M) \leq  \alpha \sqrt{t}\}\,.%=\{M: M^\top M\preceq  \alpha^2 tI_t\}\,.
\ee
%\nmargin{do you need both expressions in (9)?}
%This reformulation is possible due to the specific choice of $\Theta$ in  \eqref{eq:theta}, which corresponds to a bound on the norm of $V^\top$.
{ \begin{remark}\label{rem:lyap}
The current noise model introduces some conservatism when applying a change of variables. Specifically, given the energy inequality in \eqref{eq:energy}
	for %a symmetric positive definite matrix 
	$P = P^\top \succ 0$ with $P = R^\top R$ and invertible $R$, applying the transformations
    $$
	\tilde{x} = R x, 
	\quad 
	\tilde{A} = R A R^{-1},\quad
	\tilde{B} = R B, 
	\quad 
	\tilde{v}_k = R v_k, 
	%\quad \tilde{V} = R V.
	$$
	gives $\tilde{A} \in \mathcal{L}_L$ (see \eqref{eq:cv}) and $\tilde{V} \tilde{V}^\top \preceq \alpha^2 t P$. This tight relation can be replaced with the relaxation %relaxed as
	\begin{equation}
	%	\begin{aligned}
		\tilde{V}^\top \tilde{V} 
		= (R V)^\top (R V) 
		= V^\top P V\preceq \lambda_{\max}(P)\, \alpha^2 t I_n\,,
	\end{equation}
	leading to $\tilde{V} \in \mathcal{L}_{\alpha \sqrt{t \lambda_{\max}(P)}}$. %However, this relaxation is conservative. %, as it ignores the directional structure encoded in $P$. 
    Future work aims to develop interpolation frameworks that can handle more bounds (e.g., \eqref{eq:bound_noise} for a general $\Theta$), preventing this conservatism. %Future work will develop an interpolation framework to handle more general bounds that will overcome this conservatism. %%Future work will aim to reduce this conservatism by considering more general bounds. 
\end{remark}}
%\nmargin{The remark is distracting for me;do we need it?}
\subsection{Objectives}\label{ss:goals}
Given the above setup, our goal is to address two main classes of problems. 
\subsubsection{Data consistency and inference}
Given the bounds $\alpha\geq 0$ and $L>0$, and the data $\datax,\dataxplus, \datau$, we aim to determine necessary and sufficient conditions on the data points for the existence of $A$ in $\mc L_L$ and $V$ in $\mc L_{\alpha \sqrt{t}}$ such that \eqref{eq:matrix_form} is satisfied. 
	
This characterization allows us to assess data consistency, heterogeneity or corruption. It also leads to inference problems that quantify the uncertainty or energy amplification required to explain the data.
%This characterization can be used to test data consistency, heterogeneity or corruption %, revealing whether observed trajectories can be explained by a model within the given bounds. %that is, whether the observed trajectories can be explained by a system model within the given bounds. It further naturally leads to some related inference problems that quantify how much uncertainty or energy amplification is needed to explain the data. 
The \emph{minimal noise level} \be\label{eq:mina}\min_{\alpha, A, V} \, \alpha \ \text{s.t.} \ A \in \mathcal{L}_1, \, V \in \mathcal{L}_{\alpha \sqrt{t}}, \, \eqref{eq:matrix_form}\ee 
quantifies the smallest disturbance level that makes the data consistent with a stable energy-minimizing model (possibly for a given Lyapunov function $P$, see Remark \ref{rem:lyap}). %A small noise level suggests that the data fit a stable energy-minimizing model (consistent with the chosen Lyapunov function), while a
A large minimal noise suggests model mismatch, corrupted data, or unmodelled dynamics. 
  % or {\blue minimal noise level for \emph{given Lyapunov function} $P$	\be\label{eq:pot}\min_{\alpha, A, V} \, \alpha \ \text{s.t.} \ A \in \mathcal{L}_1, \, V \in \mathcal{L}_{\alpha \sqrt{t\normtwo{P}}}, \, \eqref{eq:matrix_form}\ee 	with $P=R^\top R$ and $\tilde \datax=R\datax$ and $\tilde X_+=R\dataxplus$ and  $\tilde B=RB$.}
 The \emph{minimal energy amplification bound} %the smallest $L$ that explains the data for a given noise level:
\be\label{eq:minl}\min_{L, A, V} \, L \ \text{s.t.} \ A \in \mathcal{L}_L, \, V \in \mathcal{L}_{\alpha \sqrt{t}}, \, \eqref{eq:matrix_form}\ee
gives the smallest amplification bound needed to explain the data for a given noise level. Values $L\leq 1$ imply stability, while $L>1$ quantifies the system’s maximal energy amplification.
%Finally, one might want to optimize both noise and system bounds simultaneously:
Finally, both uncertainty and energy amplification can be jointly optimized,
%\be\label{eq:minal}\min_{\alpha, L, A, V} \, w_\alpha \alpha+w_L L \ \text{s.t.} \ A \in \mathcal{L}_L, \, V \in \mathcal{L}_{\alpha \sqrt{t}}, \, \eqref{eq:matrix_form}\ee 
%for some given $w_\alpha$ and $w_L$ in $\R$, 
providing a trade-off between noise and system energy. %providing a balanced model that trades off uncertainty (noise level) and energy amplification (system bound).

 %These are convex and can be expressed as semidefinite programs (SDPs); 
Note that the quadratic constraints on $A$ and $V$ in \eqref{eq:mina} and \eqref{eq:minl} %\eqref{eq:minal} 
can be rewritten as linear matrix inequalities (via the Schur complement), yielding an SDP. However, our objective is to address them in a \emph{data-driven} manner, without explicitly computing the matrices $A$ and $V$, thereby paving 
the way toward set-based predictions and the nonlinear setting.
% Our approach will reduce the complexity of the problem and paves the way to the nonlinear setting.
%Our approach not only enables scalable inference directly from samples reducing the complexity of the problem but also paves the way for extending the framework to the nonlinear setting, where explicit parametrizations of the dynamics are either unavailable or intractable.

\subsubsection{One-step ahead prediction} Given the bounds $\alpha\geq 0$ and $L>0$, the data $\datax,\dataxplus, \datau$ and the initial condition $x=x_t$,  we aim to characterize the set of next states that are consistent with the past measurements as a function of the input $u$. More precisely, %by denoting with $x=x_t$ the state at time and by $u=u_t$ the input, 
we aim to give an exact characterization of the set
%\end{itemize}
\be\label{eq:x_feas}
\begin{aligned}
\mc X_\text{+}(u):=\{& Ax+Bu+ v\,,\\&\quad 
\forall  A \in \mc L_L\,,\; \bma v& V\ema \in \mc L_{\alpha \sqrt{t+1}} \text{ s.t. } \eqref{eq:matrix_form}\}\,.
\end{aligned}
\ee
Again, note that our characterization is purely data-based and does not require fixing or identifying any specific model.
The set \eqref{eq:x_feas} can be used for \textit{safety verification} and \textit{optimal control design}. Examples of such problems are:
\begin{itemize}
	\item \emph{Safety:} Given a target region $\mathcal{B} \subset \mathbb{R}^n$, find a control $u$ such that all consistent next states remain in $\mathcal{B}$:
	\[
	\text{find } u \text{ s.t. } \mathcal{X}_+(u) \subseteq \mathcal{B}.
	\]
	
	\item \emph{Worst-case control / energy minimization:} Minimize the worst-case cost over all next states:
	\[
	\min_{u \in \mathbb{R}^m} \max_{x \in \mathcal{X}^+(u)} x^\top Q x + u^\top R u,
	\]
	with $Q \succ 0, R \succ 0$. This corresponds to \textit{one-step LQ control}.
\end{itemize}
While the focus of this paper is on one-step-ahead predictions, the extension to multiple steps ahead is fully compatible with our approach. However, this introduces additional computational complexity, which will be explored and discussed in future work. 

% {\blue
% \begin{example}
% Assume that we have collected $t=6$ data-points, as shown in Figure Fig. \ref{fig:datapoints}.
% \end{example}}
 %Exploring and implementing this multi-step extension is a direction that we are actively pursuing in current and future work.
\section{Interpolation conditions}\label{sec:2}
%The upcoming analysis is based on algebraic results that determine tight conditions for the existence of a linear bounded operator that interpolates the data.  
The aim of this section is to determine necessary and sufficient conditions under which two given matrices,  $\Xinterp$ and $\Yinterp$ (representing the observed data), can be interpolated by a linear operator $M$ (i.e., $\Yinterp=M\Xinterp$) that satisfies prior structural or physical constraints such as bounded energy constraints. 
As a simple motivating example, consider the autonomous system 
\be\label{eq:aut_sys}
x_{k+1}=Ax_k
\ee
where $A$ in $\R^{n\times n}$ is unknown. Given the measurements $x_0, \dots, x_t$, %that satisfy $\dataxplus=A\datax$ with $\datax$ and $\dataxplus$ defined as in \eqref{eq:data}.  %and we have the prior knowledge that $\normtwo{A}\leq 1$. %let $\datax$ and $\dataxplus$ be defined as in \eqref{eq:data}.  
%Does there exist $A$ in $\Aset$ such that $\dataxplus=A\datax$?
we aim to determine if there exists a matrix $A$ within a given constraint set (e.g., $A^\top A\preceq L^2I_n$) such that $\dataxplus=A\datax$ with $\datax$ and $\dataxplus$ defined as in \eqref{eq:data}. 

%We first consider the case where prior knowledge of $M$ is encoded as a single energy bound:
%$$M^\top M \preceq \bound,$$where  $\bound\succeq 0$ is a given positive semidefinite matrix.  
%This problem can be solved using an algebraic result already existing in the literature.
%, that applies, when prior knowledge is encoded in a single energy bound:
%$$M^\top M \preceq \bound,$$
%where  $\bound\succeq 0$ is a given positive semidefinite matrix.   
This question can be addressed using an existing algebraic result, which
appears in various forms, including Proposition 1 in \cite{bisoffi2024controller}. It can be derived from the Douglas Lemma \cite{douglas1966majorization} and the special case $G=\Abound$ also follows from Theorem 3.1 in \cite{bousselmi2024interpolation}.

\begin{proposition}[{\cite[Prop.~1]{bisoffi2024controller}}]
%\begin{proposition}[
%\cite[Prop. 1]{bisoffi2024controller}
%]
\label{pr:1}
	Given the \emph{data} $\Xinterp$ in $\R^{m\times t}$ and $\Yinterp$ in $\R^{n\times t}$ and the \emph{bound} $\bound\in \R^{m \times m}$, $\bound\succeq 0$, there exists 	$M \in \R^{n\times m}$ such that
	\be\label{eq:bound_gen}
	\Yinterp = M\Xinterp\,,\qquad M^{\top}M\preceq \bound
	\ee 
	if and only if
	\be 
	\Yinterp^{\top} \Yinterp\preceq \Xinterp^{\top} \bound\Xinterp\,.	\ee 
\end{proposition}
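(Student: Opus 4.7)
The plan is to handle the two directions separately: necessity is immediate by direct substitution, while sufficiency is proved constructively by reducing to a classical Douglas-type contractive factorization.

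Necessity follows at once: given $M$ with $\Yinterp = M\Xinterp$ and $M^\top M \preceq \bound$, conjugating the Loewner inequality by $\Xinterp$ yields $\Yinterp^\top \Yinterp = \Xinterp^\top M^\top M \Xinterp \preceq \Xinterp^\top \bound \Xinterp$.

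For sufficiency, since $\bound \succeq 0$, I would factor $\bound = D^\top D$ (via an eigendecomposition of $\bound$) and reduce the problem to finding a contraction $K$, i.e.\ $\|K\| \leq 1$, satisfying $K D \Xinterp = \Yinterp$. Once such $K$ is produced, the choice $M := KD$ simultaneously gives $M\Xinterp = KD\Xinterp = \Yinterp$ and $M^\top M = D^\top K^\top K D \preceq D^\top D = \bound$, so both target conditions follow at once.

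Constructing $K$ is the heart of the argument. Let $W := D\Xinterp$, so the hypothesis becomes $\Yinterp^\top \Yinterp \preceq W^\top W$, and take the thin SVD $W = U\Sigma V^\top$ with $\Sigma$ invertible. First, testing the Loewner inequality against any $v \in \ker W$ forces $\|\Yinterp v\|^2 \leq 0$, hence $\ker W \subseteq \ker \Yinterp$ and $\Yinterp = \Yinterp V V^\top$. I would then propose the explicit candidate $K := \Yinterp V \Sigma^{-1} U^\top$, verify $KW = \Yinterp V V^\top = \Yinterp$ directly, and observe that conjugating the hypothesis by $V\Sigma^{-1}$ yields $(\Yinterp V\Sigma^{-1})^\top (\Yinterp V\Sigma^{-1}) \preceq I$, which is equivalent to $\|K\| \leq 1$ since $KK^\top = (\Yinterp V\Sigma^{-1})(\Yinterp V\Sigma^{-1})^\top$.

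The main obstacle I anticipate is handling the rank-deficient case cleanly. A naive attempt such as $M := \Yinterp \Xinterp^\dagger$ does not in general satisfy $M^\top M \preceq \bound$: it leads to a bound of the form $(\Xinterp \Xinterp^\dagger)\bound(\Xinterp \Xinterp^\dagger)$, and conjugation of $\bound$ by an orthogonal projection need not be Loewner-dominated by $\bound$. The factorization $\bound = D^\top D$ combined with the SVD of $W = D\Xinterp$ circumvents this pitfall by moving the pseudo-inverse step inside the contractive piece $K$, where the contraction property is automatically preserved.
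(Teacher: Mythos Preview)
Your proof is correct. Note, however, that the paper does not supply its own proof of Proposition~\ref{pr:1}: it is quoted as an existing result from \cite{bisoffi2024controller}, with the remark that it ``can be derived from the Douglas Lemma \cite{douglas1966majorization}.'' Your argument---factor $\bound=D^\top D$, reduce to finding a contraction $K$ with $KD\Xinterp=\Yinterp$, and build $K$ explicitly from the thin SVD of $W=D\Xinterp$---is exactly a Douglas-type contractive factorization, so it is fully in line with the derivation the paper alludes to. Your closing observation that the naive choice $M=\Yinterp\Xinterp^\dagger$ can fail the bound $M^\top M\preceq \bound$ (because conjugating $\bound$ by the orthogonal projector $\Xinterp\Xinterp^\dagger$ need not stay below $\bound$) is a nice diagnostic for why the detour through $D$ is needed.
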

\medskip
%This result provides an \textit{exact test of consistency} between the observed data and the prior energy bound: before constructing $M$, one can simply check the matrix inequality above to verify feasibility. 
Proposition \ref{pr:1} provides a straightforward feasibility test: in the previous example, there exists a matrix
$A$ satisfying $A^\top A\preceq I_n$ and $\dataxplus=A\datax$
if and only if
$$
\dataxplus^\top \dataxplus\preceq L^2\datax^\top \datax\,.
$$
%Intuitively, this means that the total energy of the future states cannot exceed the energy of the past states.

Using the same framework, we can characterize future states consistent with the data and the prior knowledge, i.e.,
%e.g., vectors $y$ that can be expressed as $\yinterp =M\xinterp$ for an admissible matrix $M$ satisfying \eqref{eq:bound_gen}. %, that is, 
\begin{equation}
	\mathcal{Y}_+ := \{\, y \in \mathbb{R}^{n} : y = Mz,\; M^\top M \preceq G,\; Y = MZ \,\}\,,\end{equation}%\nmargin{Add what is $z$ here}
    where $\xinterp$ is the current state. The additional condition $\yinterp =M\xinterp$ 
can be encoded  in the framework of Proposition \ref{pr:1} by imposing
$$\bma \yinterp& \Yinterp\ema =M\bma \xinterp& \Xinterp\ema\,,$$ 
thus obtaining the necessary and sufficient condition %that $y$ in $\mc Y_+$ if and only if
$$
\bma \yinterp& \Yinterp\ema^\top \bma \yinterp& \Yinterp\ema \preceq \bma\xinterp& \Xinterp\ema^\top G \bma\xinterp& \Xinterp\ema\,.
$$
Interestingly, when the data are consistent with the bound, the set $\mathcal{Y}_+$ forms an ellipsoid, as stated in the following proposition (proved in Appendix  \ref{app1}). 
\begin{proposition}\label{pr:no_noise}
	Given the \emph{data} $\Xinterp$ in $\R^{m\times t}$ and $\Yinterp$ in $\R^{n\times t}$ and the \emph{bound} $\bound\in \R^{m\times m}$,  $\bound\succeq 0$, let
	\be\label{eq:D}
	D:=\Xinterp^{\top} \bound\Xinterp-\Yinterp^{\top} \Yinterp
	\ee
	and assume that $\text{rank}(D)=\text{rank}(\Xinterp)$. Then:
	\begin{enumerate}
		\item If $D\nsucceq 0$, $\mc Y_+=\emptyset\,$ (no feasible future states exist).
      %  \nmargin{is "initial condition" a right term? how about "current state"? }
		\item  If $D\succeq 0$, for every \emph{current state} $\xinterp$ in $\R^m$, 
		\be\label{eq:ell} 
		\mc Y_+=\{\yinterp \in \R^n\,\mid\,(\yinterp-c)^{\top}\mc A(\yinterp-c) \leq Q\}
		\ee
		where $\mc A =  I_{n}+\Yinterp D^{\dagger}\Yinterp^{\top}\succ 0$ and $c$ and $Q\geq0$ can be computed from the data and the bound (see  \eqref{eq:par_ell} in Appendix \ref{app1}).
		\end{enumerate}
\end{proposition}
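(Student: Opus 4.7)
The plan is to reduce the characterization of $\mc Y_+$ to a single application of Proposition~\ref{pr:1} on augmented data, followed by a generalized Schur complement manipulation. Since the two constraints $\yinterp = M\xinterp$ and $\Yinterp = M\Xinterp$ combine into $[\yinterp,\Yinterp] = M[\xinterp,\Xinterp]$, Proposition~\ref{pr:1} applied to this augmented pair with bound $G$ yields that $\yinterp \in \mc Y_+$ if and only if
\begin{equation*}
\begin{bmatrix} \alpha(\yinterp) & d(\yinterp)^\top \\ d(\yinterp) & D \end{bmatrix} \succeq 0,
\end{equation*}
with $\alpha(\yinterp) := \xinterp^\top G \xinterp - \yinterp^\top\yinterp$, $d(\yinterp) := \Xinterp^\top G \xinterp - \Yinterp^\top \yinterp$, and $D$ as in \eqref{eq:D}. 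Part~1 is then immediate: if $D \nsucceq 0$ the lower-right block already violates positive semidefiniteness, so no $\yinterp$ is admissible.

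For Part~2, I would invoke the generalized Schur complement on the block matrix: under $D \succeq 0$, positive semidefiniteness is equivalent to (i)~$d(\yinterp) \in \text{range}(D)$ and (ii)~$\alpha(\yinterp) - d(\yinterp)^\top D^\dagger d(\yinterp) \geq 0$. The key structural step is to show that (i) is automatic under the rank hypothesis. For any $v \in \ker \Xinterp$, the relation $D \succeq 0$ forces $v^\top \Yinterp^\top \Yinterp v \leq v^\top \Xinterp^\top G \Xinterp v = 0$, hence $\Yinterp v = 0$ and therefore $Dv = 0$. Taking orthogonal complements yields $\text{range}(\Yinterp^\top) \subseteq \text{range}(\Xinterp^\top)$ and $\text{range}(D) \subseteq \text{range}(\Xinterp^\top)$, and the assumption $\text{rank}(D) = \text{rank}(\Xinterp)$ upgrades the second inclusion to equality. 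Hence both $\Xinterp^\top G \xinterp$ and $\Yinterp^\top \yinterp$ lie in $\text{range}(\Xinterp^\top) = \text{range}(D)$, so $d(\yinterp) \in \text{range}(D)$ for every $\xinterp,\yinterp$.

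With (i) automatic, $\mc Y_+$ is exactly the sublevel set of (ii). Expanding
\begin{equation*}
d(\yinterp)^\top D^\dagger d(\yinterp) = \xinterp^\top G\Xinterp D^\dagger \Xinterp^\top G \xinterp - 2\,\yinterp^\top \Yinterp D^\dagger \Xinterp^\top G \xinterp + \yinterp^\top \Yinterp D^\dagger \Yinterp^\top \yinterp
\end{equation*}
and completing the square in $\yinterp$ with $\mc A := I_n + \Yinterp D^\dagger \Yinterp^\top$ (positive definite as $I$ plus a PSD term), $c := \mc A^{-1}\Yinterp D^\dagger \Xinterp^\top G \xinterp$, and $Q := c^\top \mc A c + \xinterp^\top(G - G\Xinterp D^\dagger \Xinterp^\top G)\xinterp$ recovers \eqref{eq:ell}. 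Non-negativity of $Q$ then follows because $D \succeq 0$ guarantees the existence of at least one admissible $M$ (by Proposition~\ref{pr:1} on the raw data), so $M\xinterp$ is a concrete element of $\mc Y_+$ and, since the LHS of \eqref{eq:ell} is non-negative, $Q \geq 0$. The main obstacle is step~(i): the Schur complement characterization is meaningful only when $d(\yinterp)$ lies in $\text{range}(D)$, which is exactly what the rank hypothesis supplies; without it, $\mc Y_+$ would generically be the intersection of the ellipsoid with a proper affine subspace and the clean description would break down.
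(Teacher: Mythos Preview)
Your argument is correct and follows the same architecture as the paper's: augment the data to $[\yinterp,\Yinterp]=M[\xinterp,\Xinterp]$, apply Proposition~\ref{pr:1}, and reduce the resulting $2\times 2$ block inequality via the generalized Schur complement to the three conditions $D\succeq 0$, $d(\yinterp)\in\text{range}(D)$, and the scalar inequality that becomes the ellipsoid after completing the square. The expansion to $(\yinterp-c)^\top\mc A(\yinterp-c)\le Q$ and your feasibility argument for $Q\ge 0$ match the paper's $c=\mc A^{-1}\mc B$, $Q=\mc B^\top\mc A^{-1}\mc B-\mc C$.

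The one substantive difference is how you dispatch the range condition. The paper isolates this as a separate Lemma (Lemma~\ref{lemma2}): it shows $(I-DD^\dagger)\Xinterp^\top=0$ is equivalent to $\text{rank}(D)=\text{rank}(\Xinterp)$ by invoking Rouch\'e--Capelli on $Dw=\Xinterp^\top$ and factoring $D=\Xinterp^\top(\bound-\hat M^\top\hat M)\Xinterp$ through an interpolant $\hat M$ supplied by Proposition~\ref{pr:1}; it then uses $Y=\hat M\Xinterp$ to pull the $\Yinterp^\top\yinterp$ term through $\Xinterp^\top$. Your kernel argument is more self-contained: from $D\succeq 0$ and $\Xinterp v=0$ you get $\Yinterp v=0$ and $Dv=0$ directly, so $\text{range}(\Yinterp^\top)\subseteq\text{range}(\Xinterp^\top)$ and $\text{range}(D)\subseteq\text{range}(\Xinterp^\top)$, with the rank hypothesis turning the latter into equality. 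This avoids constructing $\hat M$ and the Rouch\'e--Capelli detour, and it makes transparent why both summands of $d(\yinterp)$ land in $\text{range}(D)$. The paper's route, on the other hand, packages the rank condition as a reusable lemma and additionally simplifies $c$ and $Q$ to the closed forms $c=\Yinterp(\bound^{1/2}\Xinterp)^\dagger\bound^{1/2}\xinterp$ and $Q=\xinterp^\top\bound^{1/2}(I-\bound^{1/2}\Xinterp D^\dagger D(\bound^{1/2}\Xinterp)^\dagger)\bound^{1/2}\xinterp$, which you do not pursue but which is not needed for the proposition as stated.
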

\medskip
% Theorem \ref{cr1} provides a complete characterization for the set of feasible states consistent with the data and the system's bound. If the data are consistent with the physical assumption on the norm (i.e., the first condition in \eqref{eq:ell} is satisfied), the set of feasible states forms an ellipsoid or, in some limit cases, an ellipsoid in a lower dimension, as shown in the following example.
Proposition \ref{pr:no_noise}  provides  a complete characterization for the set of feasible next states consistent with the data and the system's bound when $\text{rank}(D)=\text{rank}(\Xinterp)$.
%\nmargin{any  notable special cases for the rank condition?}
This rank condition is primarily a technical requirement. %: if it does not hold, the resulting feasible set is an ellipsoid in a lower-dimensional space. 
{\blue For instance, for $G=I_n$, $\Xinterp=\bma0&1 \ema^\top $ and $\Yinterp =\bma1&0 \ema^\top$, we find that $D=0$ and thus $\text{rank}(D)=0<\text{rank}(\Xinterp)$. %Therefore, Proposition \ref{pr:no_noise} does not apply. 
If we slightly increase $L$ by some $\epsilon>0$, we obtain $D_\epsilon>0$ and Proposition \ref{pr:no_noise} applies. %, and we obtain an ellipse. 
Intuitively, the ellipsoid collapses into %a resulting feasible set is 
an ellipsoid in a lower-dimensional space when the bound perfectly aligns with the data. }
For simplicity, we omit this case here, but it will be addressed in the full version of the paper.

%\nmargin{There is no $M$ in Prop 2!}
Proposition \ref{pr:1} %and Proposition \ref{pr:no_noise} 
apply when prior knowledge about $M$ is captured by a single energy bound $M^\top M \preceq \bound\,.$
%where  $\bound\succeq 0$ is a given positive semidefinite matrix.   
%In order to apply the previous reasoning to the setting in Section \ref{sec:1}, we needto extend Proposition \ref{pr:1} to
%In order to apply this result to the setting presented in Section \ref{sec:2}, we need to generalize Proposition \ref{pr:1} to the case of multiple bounded components (such as a bound on $A$ and a bound on $V$).
To extend this to the setting described in Section \ref{sec:2}, we need to generalize the result to accommodate multiple bounded components, such as separate bounds on $A$ and $V$. 
We therefore seek conditions for the existence of a block-structured linear operator%$M$ %as in \eqref{eq:M} 
\be\label{eq:M}
M=\bma M_1& \dots &M_N\ema \,,
\ee
where each block is subject to its own bound
\be\label{eq:multi_b}
M_i^\top M_i\preceq \blockbound_i
\ee
for some %\emph{given} %bounds
symmetric positive definite matrices $\blockbound_i\succeq 0$. 
As we shall see, this problem can be reformulated as a positive semidefinite matrix completion problem.
\begin{definition}
	Let
	\[
	G_{\Gamma} =
	\bma
	\Gamma_1 & ? & \cdots & ? \\[4pt]
	? & \Gamma_2 & \cdots & ? \\[2pt]
	\vdots & \vdots & \ddots & \vdots \\[2pt]
	? & ? & \cdots & \Gamma_N
	\ema
	\]
	be a partially specified block matrix, whose diagonal blocks 
	$\Gamma_i \in \mathbb{R}^{m_i \times m_i}$ are known, %and
	while off-diagonal blocks (marked by ``$?$'') are unspecified.  
	A \emph{positive semidefinite (PSD)  matrix completion} of $G_{\Gamma}$ is any symmetric PSD matrix %positive semidefinite full matrix
	$$
	\boundcomp = (\boundcomp_{ij})_{i,j=1}^N
	$$
	such that each diagonal block satisfies %, for all $i$,
	$\boundcomp_{ii} = \Gamma_i$.
\end{definition}
In words, a PSD completion fills the unspecified blocks of  $\bound_\blockbound$ while preserving overall positive semidefiniteness.

The next result generalizes Proposition \ref{pr:1} to the case of multiple energy bounds.
\medskip
\begin{proposition}[Multi-bound extension]\label{pr:multi_b_int}%[Generalization of Proposition \ref{pr:1}]
	Given the data $\Yinterp\in \R^{n\times t}$ and $\Xinterp_i\in \R^{m_i\times t}$ for $i=1,\dots, N$, and symmetric bounds $\blockbound_i \in \R^{m_i\times m_i}$, $\blockbound_i\succeq 0$, there exist matrices $M_i\in \R^{n\times m_i}$ satisfying
	\begin{equation}\label{eq:m_mat}
		\Yinterp = \sum_{i=1}^N M_i \Xinterp_i,
		\qquad
			M_i^{\top}M_i \preceq \blockbound_i\,, \quad i=1,\dots, N\,,%\quad \forall i\in\mc I,
	\end{equation}
	if and only if there exists a PSD completion $\boundcomp$ of $\bound_\blockbound$ such that
	\begin{equation}\label{eq:bound_m_matr}
		\Yinterp^{\top}\Yinterp \preceq \Xinterp^{\top}\boundcomp\Xinterp,
		\quad \text{with}\quad
		\Xinterp = \bma \Xinterp_1 \\ \vdots \\ \Xinterp_N \ema.
	\end{equation}
\end{proposition}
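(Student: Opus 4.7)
The plan is to reduce the multi-bound statement to the single-bound case of Proposition~\ref{pr:1} by stacking the data and concatenating the operators. Writing $M = \bma M_1 & \cdots & M_N\ema$ and $Z = \bma Z_1^\top & \cdots & Z_N^\top\ema^\top$, the interpolation equality $Y = \sum_{i=1}^N M_i Z_i$ is exactly $Y = MZ$. The individual constraints $M_i^\top M_i \preceq \Gamma_i$ only touch the diagonal blocks of the Gram matrix $M^\top M$, which is precisely why the natural object on the bound side is a PSD completion of $G_{\Gamma}$ rather than a single globally prescribed bound: the off-diagonal blocks $M_i^\top M_j$ are free and must be chosen jointly with the $M_i$'s so that the resulting Gram matrix is PSD.

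For necessity, starting from given $M_i$ satisfying $M_i^\top M_i \preceq \Gamma_i$, I would exhibit an explicit completion. The natural candidate is
\[
\boundcomp \;=\; M^\top M \;+\; \diag(\Gamma_1 - M_1^\top M_1,\,\ldots,\,\Gamma_N - M_N^\top M_N).
\]
Each slack $\Gamma_i - M_i^\top M_i$ is PSD by hypothesis, so $\boundcomp$ is a sum of two PSD matrices and hence PSD. Its $i$-th diagonal block is $M_i^\top M_i + (\Gamma_i - M_i^\top M_i) = \Gamma_i$, so $\boundcomp$ is a valid PSD completion of $G_{\Gamma}$. Finally, $\boundcomp \succeq M^\top M$ yields $Z^\top \boundcomp Z \succeq Z^\top M^\top M Z = Y^\top Y$.

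For sufficiency, I would take any PSD completion $\boundcomp$ of $G_{\Gamma}$ with $Y^\top Y \preceq Z^\top \boundcomp Z$ and invoke Proposition~\ref{pr:1} with data $(Z,Y)$ and bound $\boundcomp$. This produces a matrix $M \in \R^{n\times \sum_i m_i}$ with $MZ = Y$ and $M^\top M \preceq \boundcomp$. Partitioning $M = \bma M_1 & \cdots & M_N\ema$ conformably with the block structure of $Z$ gives $Y = \sum_{i=1}^N M_i Z_i$, while extracting the $i$-th diagonal block of the inequality $M^\top M \preceq \boundcomp$ yields $M_i^\top M_i \preceq \boundcomp_{ii} = \Gamma_i$, as required.

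The argument is essentially assembly once Proposition~\ref{pr:1} is available, so I do not anticipate a serious technical obstacle. The only conceptual step worth emphasizing is the identification of block-wise energy bounds with constraints on the diagonal blocks of an ambient PSD Gram matrix, which is exactly what makes PSD matrix completion the right language. A minor care-point is that Proposition~\ref{pr:1} must be applied with an abstract PSD bound $\boundcomp$ obtained from the completion rather than with a physically prescribed one, and that the recovered $M$ can be split consistently with the dimensions $m_i$; both points follow immediately from the block structure of $Z$.
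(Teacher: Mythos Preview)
Your proof is correct and essentially identical to the paper's: you build the same PSD completion (the paper writes the block matrix with off-diagonals $M_i^\top M_j$, which is exactly your $M^\top M + \diag(\Gamma_i - M_i^\top M_i)$), and both directions rely on Proposition~\ref{pr:1} in the same way. The only cosmetic difference is that in the forward direction you verify $Z^\top \boundcomp Z \succeq Y^\top Y$ directly from $\boundcomp \succeq M^\top M$, whereas the paper phrases this as an application of Proposition~\ref{pr:1}.
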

\medskip
		\begin{proof}
	\emph{($\Rightarrow$)}   Assume that matrices $M_i$ exist satisfying \eqref{eq:m_mat}. Then, define
	$$
	\boundcomp =
	\bma
	\blockbound_1 & M_1^{\top}M_2 & \cdots & M_1^{\top}M_N \\[4pt]
	\ast & \blockbound_2 & \cdots & M_2^{\top}M_N \\[2pt]
	\vdots & \vdots & \ddots & \vdots \\[2pt]
	\ast & \ast & \cdots & \blockbound_N
	\ema.
	$$
Let $M$ be given by \eqref{eq:M}.
	By construction,
	$$
		\begin{aligned}
		%		\tiny&
		 \boundcomp-M^\top M&=\diag(\Gamma_1-M_1^\top M_1,\dots, \Gamma_N-M_N^\top M_N)\\&\succeq 0\,,
		%		=\\
		%		&\bma
		%		M_1^{\top}M_1-\blockbound_1 & 0 & \cdots & 0 \\
		%		\ast & M_2^{\top}M_2-\blockbound_2 & \cdots & 0 \\
		%		\vdots & \vdots & \ddots & \vdots \\
		%		\ast & \ast  & \cdots & M_N^{\top}M_N-\blockbound_N
		%		\ema
		%	\preceq 0,
			\end{aligned}
	$$
	%	is negative semidefinite 
	since each diagonal block is PSD by assumption, while the off-diagonal blocks are zero.
	This yields $\boundcomp \succeq M^\top M \succeq 0$.  %as in \eqref{eq:bound}. 
%	Since $M^\top M \succeq 0$, it further follows that $\boundcomp\succeq 0$, i.e., $\boundcomp$ is a PSD completion of $\bound_\blockbound$.	%The statement then follows from 
Therefore,  $\boundcomp$ is a PSD completion of $\bound_\blockbound$ and satisfies both conditions needed to apply Proposition~\ref{pr:1}: $\boundcomp \succeq 0$ and $ M^\top M \preceq \boundcomp$. By applying the proposition with $G=\boundcomp$, $M$ given by \eqref{eq:M} and $Z$ in \eqref{eq:bound_m_matr}, we obtain the inequality in \eqref{eq:bound_m_matr}. %with $\bound=\boundcomp$ implies \eqref{eq:}
	%Finally, we obtain \eqref{eq:bound_m_matr}
    
	\emph{($\Leftarrow$)}
By the if-direction of Proposition~\ref{pr:1} with $\bound=\boundcomp$, there exists $M$ such that $\Yinterp=M\Xinterp$ and $M^\top M\preceq \boundcomp$.	
Since the block diagonal matrix $\boundcomp-M^\top M$ is PSD, its 
diagonal blocks are also PSD, which yields the  second inequality in \eqref{eq:m_mat}. 
\end{proof}

The direct analogous to Proposition \ref{pr:no_noise} requires now a union of ellipsoid (one for each completion), as will be discussed in Section \ref{ss:ell}.
%\nmargin{Unclear similar to what?}
 %Similarly, we can extend Proposition \ref{pr:no_noise} and characterize the set of feasibile next states as a union of ellipsoid each corresponding to a matrix completion $\boundcomp$ of $\bound_\blockbound$, as discussed in Section \ref{ss:ell}.}
%\begin{proof}\
%	By Lemma~\ref{lemma:multi_b}, condition~\eqref{eq:m_mat} holds if and only if there exist a PSD completion $\boundcomp$ such that $\Yinterp = M\Xinterp$ and $M^{\top}M \preceq \boundcomp$, with $M=[M_1,\dots,M_N]$.  \end{proof}

\section{Data-consistency and prediction in LTI systems}\label{sec:3}
We now exploit the obtained results (Proposition \ref{pr:no_noise} and \ref{pr:multi_b_int}) to address the objectives outlined in Section \ref{ss:goals}.
\subsection{Data-consistency and inference}\label{ss:ver}
%We now apply %this result to the Proposition \ref{pr:multi_b_int} to the discrete-time LTI systems introduced in Section \ref{sec:1}. 
Recall that the data satisfy \eqref{eq:matrix_form}, that is, %we have in \eqref{eq:matrix_form} the following
$$
\dataxplus-BU=\bma A&\noisem\ema \bma \datax\\ I_t\ema
$$
where $\datax$, $\datau$ and $\dataxplus$ are defined in \eqref{eq:data} and $\noisem$ is the noise. %This can be cast into the general framework of Proposition \ref{pr:multi_b_int}%of Section\ref{sec:2}
This expression fits the general framework of Proposition \ref{pr:multi_b_int} by setting
%, by setting $M=\bma A&\noisem\ema$ and
\be
M=\bma A&\noisem\ema\,,\quad	\Xinterp:=\bma \datax\\I_n\ema\,,\quad  
	\Yinterp:=	\dataxplus-BU\,.
%\begin{aligned}
%		\Xinterp&:=\bma \datax\\\Abound\ema\\ 
%	\Yinterp&:=	\dataxplus-BU\,.
%\end{aligned}
\ee
%Furthermore, recall that we assumed $A^\top A\preceq \Abound$ and $V^\top V\preceq \alpha^2tI_t$, meaning that we can apply Proposition
The corresponding energy bounds %$A^\top A\preceq \Abound$ and $V^\top V\preceq \alpha^2tI_t$  
on $A$ and $V$ can be encoded into
$\Gamma=\{\Abound\,, \alpha^2tI_t\}$. We then obtain the following.

\begin{corollary}[Data-consistency]\label{cor1} %The following statements are equivalent.
%	\begin{enumerate}
%		\item 
There exist $A$ in $\Aset$ and $V$ in $\mc L_{\sqrt{t}\alpha}$ such that \eqref{eq:matrix_form} holds true
%		\item There exists a matrix completion $\boundcomp$ of $\bound_\blockbound$ with $\blockbound=\{\Abound, \alpha^2tI_t\}$		such that		\be\label{eq:cond2_noB}			(\dataxplus-B\datau)^{\top}(\dataxplus-B\datau)\preceq \bma \datax^{\top}&  I_{t}\ema\boundcomp\bma \datax\\ I_t\ema\,.		\ee
%		\item 
if and only if there exists a matrix $\comp$ in $\R^{n\times t}$ satisfying
		\be\label{eq:ver1}
			\bma \Abound&\comp\\ \ast & \alpha^2tI_t\ema \succeq 0\,,\\
		\ee and
		\be\label{eq:ver2}
		\begin{aligned}
			L^2\datax^{\top} \datax&+\datax^{\top}\comp+\comp^{\top}\datax+t\alpha^2  I_t\\&-(\dataxplus-B\datau)^\top (\dataxplus-B\datau)\succeq0\,.
		\end{aligned}
		\ee
%	\end{enumerate} 
\end{corollary}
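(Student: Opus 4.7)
The plan is to obtain the corollary as a direct specialization of Proposition~\ref{pr:multi_b_int}, since \eqref{eq:matrix_form} already expresses the signal $\dataxplus - B\datau$ as a sum of two bounded-operator actions. First I would recast the problem in the interpolation form: set $N=2$, $M_1 := A$, $M_2 := \noisem$, $\Xinterp_1 := \datax$, $\Xinterp_2 := I_t$, and $\Yinterp := \dataxplus - B\datau$, so that $\Yinterp = M_1 \Xinterp_1 + M_2 \Xinterp_2$ is exactly \eqref{eq:matrix_form}. The two bounds $A\in\Aset$ and $\noisem\in\mc L_{\alpha\sqrt t}$ become $M_1^\top M_1 \preceq \Abound =: \blockbound_1$ and $M_2^\top M_2 \preceq \alpha^2 t\, I_t =: \blockbound_2$, the latter following from the equivalence $\sigma_{\max}(V)\le \alpha\sqrt{t} \iff V^\top V \preceq \alpha^2 t\, I_t$ noted after \eqref{eq:bound_noise}.

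Next I would apply Proposition~\ref{pr:multi_b_int} with these identifications. Any PSD completion $\boundcomp$ of the block-diagonal template $\bound_\blockbound$ must have the form
\[
\boundcomp = \bma \Abound & \comp \\ \ast & \alpha^2 t\, I_t \ema
\]
for some $\comp\in\R^{n\times t}$, and the PSD requirement is exactly \eqref{eq:ver1}. The existence of $A$ and $\noisem$ satisfying the bounds and \eqref{eq:matrix_form} is then equivalent, by Proposition~\ref{pr:multi_b_int}, to the existence of such a $\comp$ together with $\Yinterp^\top\Yinterp \preceq \Xinterp^\top \boundcomp \Xinterp$ where $\Xinterp = \bma \datax \\ I_t \ema$.

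Finally I would expand the quadratic form
\[
\Xinterp^\top \boundcomp \Xinterp
= \bma \datax^\top & I_t \ema \bma \Abound & \comp \\ \comp^\top & \alpha^2 t\, I_t \ema \bma \datax \\ I_t \ema
= L^2\datax^\top\datax + \datax^\top\comp + \comp^\top\datax + \alpha^2 t\, I_t,
\]
and substitute $\Yinterp = \dataxplus - B\datau$ to obtain \eqref{eq:ver2} after rearrangement. No step is really an obstacle here: the corollary is essentially a bookkeeping exercise translating Proposition~\ref{pr:multi_b_int} into the data-matrix notation of Section~\ref{sec:1}; the only care needed is to make the dimensions consistent (noting that $\Xinterp_2=I_t$, so the second diagonal block must be $t\times t$) and to verify that the noise bound $\noisem\noisem^\top\preceq\alpha^2 t\, I_n$ from \eqref{eq:bound_noise}–\eqref{eq:theta} is indeed equivalent to the column-side bound $\noisem^\top\noisem\preceq\alpha^2 t\, I_t$ required by the interpolation framework.
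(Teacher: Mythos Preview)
Your proposal is correct and matches the paper's approach exactly: the paper presents Corollary~\ref{cor1} as an immediate consequence of Proposition~\ref{pr:multi_b_int} under the identifications $M=[A\ \noisem]$, $\Xinterp=\bma \datax\\ I_t\ema$, $\Yinterp=\dataxplus-B\datau$ and $\Gamma=\{L^2 I_n,\alpha^2 t I_t\}$, without spelling out a separate proof. Your expansion of $\Xinterp^\top\boundcomp\Xinterp$ and the remark on the equivalence between the row-side and column-side noise bounds are precisely the bookkeeping steps implicit in the paper's one-line derivation.
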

\medskip

 Both \eqref{eq:ver1} and \eqref{eq:ver2} are linear in $\comp$, allowing feasibility to be checked via SDP. %Thus, this result provides a computationally tractable method to test whether the observed data are consistent with the assumptions on the noise and the system.  This formulation provides a data-driven means to verify the consistency of the available data and to solve inference problems without explicitly computing 
		This yields a computationally efficient, data-driven method to assess whether the observed data are consistent with the assumed noise and system properties.  
        Furthermore, it provides a foundation (see Section \ref{ss:pred}) for set-based prediction methods as well as potential nonlinear extensions. The optimization problems outlined in Section \ref{ss:goals}, that is, \eqref{eq:mina},  \eqref{eq:minl} and  %\eqref{eq:minal}
        combinations of them, can likewise be reformulated by optimizing over 
		$\comp$	and replacing the constraints with \eqref{eq:ver1} and \eqref{eq:ver2}.
     %    \begin{figure}
	%\centering
    %\includegraphics[width=0.22\textwidth]{fig_data1.eps}
%\includegraphics[width=0.22\textwidth]{fig_data2.eps}
%	\caption{See Example \ref{ex:ver}}
 %   \label{fig:datapoints}
%\end{figure}
{\blue \begin{example}\label{ex:ver}
Consider the six data-points plotted on the left of Fig. \ref{fig:datapoints}, that is $\tiny X=\bma
0.03 & -0.86 & 1.39 & -1.44 & 1\\
-1.55 & 1.33 & -0.65 & -0.21 & 1
\ema$ and $\tiny \dataxplus= \bma 0.84 & 0.03 & -0.86 & 1.39 & -1.44\\
1.25 & -1.55 & 1.33 & -0.65 & -0.21\ema$,  %\footnote{On the left, $x_0=\bma 1;1\ema$,   $x_1= \bma -1.44;    -0.21\ema $, $x_2=  \bma 1.39; -0.65\ema $,  $x_3=-\bma 0.86;  1.33\ema$,    $x_4=\bma0.03;-1.55\ema$, $x_5=\bma 0.84; 1.25\ema$ and $x_6=\bma 1.44; -0.51\ema$. On the right, $x_0=\bma 1;1\ema$, $x_1=\bma -1.04;0.15\ema$, $x_2=\bma 0.9;-0.4\ema$, $x_3=\bma -0.08;-1.2\ema$ $x_4=\bma-0.5;-0.8\ema$, $x_5=\bma0.83 ;  0.55\ema$ and $x_6=\bma -0.93;0.26\ema $. }
%\footnote{Namely, $\small X=\bma  0.83 &  -0.48 & -0.08&0.9& -1.04& 1\\   0.55&  -0.81 &    1.2&   -0.4 &   0.15    &1\ema $} 
and let $U=0_{m\times 1}$.
%and assume that we want to  determine the minimal noise level for which there exists an energy decreasing system that explains the data.  
By Corollary \ref{cor1}, there exist $A$ in $\mc L_L$ and $V$ in $\mc L_{\alpha\sqrt{t}}$ interpolating the data if and only if %and we seek for the minimal noise level $\alpha$ such that 
 there exists $F$ in $\R^{2\times 6}$ satisfying the conditions in
\eqref{eq:ver1} and \eqref{eq:ver2}. Let $L=1$. Then, by minimizing the noise level over $\alpha$ and $F$ such that \eqref{eq:ver1} and \eqref{eq:ver2} is satisfied, we find    $\alpha^*\approx 0.06$.  If we now set $\alpha=0.06$ and minimize the energy amplification bound,  %such that \eqref{eq:ver1} and \eqref{eq:ver2} are satisfied, 
we find $L^*= 1$. Finally, if we minimize the noise level $\alpha$ over $L$, $\alpha$ and $F$, %such that \eqref{eq:ver1} and \eqref{eq:ver2} is satisfied, %(see \eqref{eq:minal} with $w_\alpha=1$ and $w_L=0$), 
we get $\alpha^*=0.01$ for $L^*=2.9$.  When both parameters are optimized jointly, the noise level decreases while the energy bound increases, showing that allowing higher energy enables lower noise.  Consider now the data on the right of Fig. \ref{fig:datapoints}, that is,
$\tiny X=\bma
-0.5 & -0.08 & 0.9 & -1.04 & 1\\
-0.8 & -1.2 & -0.4 & 0.15 & 1
\ema$ and %$x_5=\bma 0.83;0.55\ema$. %,\quad
$\tiny \dataxplus=\bma0.83 & -0.5 & -0.08 & 0.9 & -1.04\\0.55 & -0.8 & -1.2 & -0.4 & 0.15\ema$. 
Then, the minimum noise level for $L=1$ is $\alpha^*\approx 0.32$, while the minimum energy amplification bound for $\alpha=0.32$ is $L^*\approx 0.83$. Finally, the minimum noise level obtained by letting both $\alpha$ and $L$ vary is $\alpha^*=0.32$ (for $L^*=6.18$). Notably, the noise level that we first obtain is already minimal and, when minimizing the energy amplification bound for such $\alpha$,  we obtain a lower $L$. %: therefore decreasing the (up to the minimum feasible bound) does not impact the result. 
Since the noise cannot be further reduced, joint optimization leaves $\alpha$ unchanged, confirming it has reached its minimum. 
These results suggest that the first set of data-points has lower noise compared to the second one, while the second one has a higher decrease in energy, which is consistent with the true value of the system matrix and the noise. We remark that these results are obtained in a data-driven manner, by constructing a feasible completion $F$ instead of computing candidates for $A$ and $V$. %Notably, in the first case, when minimizing the energy amplification bound with the obtained minimal noise level returns the original amplification bound, that is, $L=1$.%Note that, in the first case, optimizing over the energy bound given noise level gives the same result as optimizing over the opposite case. 
%Indeed, the alpha could decrease when allowing increasing energy. 

%These results suggest that the first set of data-points has much lower noise compared to the second one, although the second one has a high decrease in energy.
%$F^*=\bma 0.025& -0.04 &   0.07   &-0.09&    0.08\\  -0.10&   -0.05&   -0.16 &  -0.18  &-0.07\ema $ and $\alpha^*=0.14$. Similarly, if we set $\alpha=0.14$ we look for the minimal $L$ such that 
\end{example}
}
%{\color{red} Add example}
\subsection{One-step ahead prediction}\label{ss:pred}
%We now extend this reasoning to one-step ahead prediction.
%\nmargin{which "reasoning" in the opening sentence?}
%We now exploit Proposition \ref{pr:multi_b_int} to one-step ahead prediction. Our goal is to
We now apply Proposition \ref{pr:multi_b_int} to characterize the set of all next states $\xplus$ consistent with past observations and prior bounds, as a function of the input $u$, namely, $\mc X_+(u)$ as defined in \eqref{eq:x_feas}.

\begin{theorem}[One-step ahead prediction]\label{th1} 
Consider the data $\datax$, $\dataxplus$ and $\datau$, the bounds $\alpha\geq 0$ and $L>0$ and the initial condition $\x=x_t$. Let $\bar{X}=\bma \x&\datax\ema $. Then, we find 
\be\mc X_+(u)=\{\xplus \in \R^n \mid  \exists \comp \text{ s.t. } \eqref{eq:pred2}\,, \eqref{eq:pred3}\}\ee
where
\be \label{eq:pred2}
\bma \Abound&\pcomp\\ \ast & \alpha^2(t+1)I_{t+1}\ema \succeq 0\,,\\
\ee and 
\be\label{eq:pred3}
\begin{aligned}
	&\bma (\xplus-Bu)^{\top}(\xplus-Bu)&(\xplus-Bu)^{\top}(\dataxplus-B\datau)\\ \ast &(\dataxplus-B\datau)^{\top}(\dataxplus-B\datau)\ema\\[2pt]&\qquad\qquad\preceq	\bar{X}^{\top} \bar{X}+\bar{X}^{\top}\pcomp+\pcomp^{\top}\bar{X}+(t+1)\alpha^2  I_{t+1}\,.%\bma\bar{X}^\top& I_{t+1}\ema \boundcomp \bma\bar{X}\\ I_{t+1}\ema 	\,,
\end{aligned} 		\ee
\end{theorem}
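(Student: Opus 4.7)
The plan is to reduce the one-step-ahead prediction problem to a data-consistency problem on an enlarged window of length $t+1$, so that Proposition~\ref{pr:multi_b_int} can be invoked directly. The key observation is that a candidate $\xplus$ belongs to $\mc X_+(u)$ iff the new pair $(x,\xplus)$ and input $u$ can be appended to the existing data in such a way that the whole trajectory is generated by a single admissible $(A,\bar V)$.

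To make this precise, I would introduce the augmented matrices $\bar X = \bma x & X \ema$, $\bar X_+ = \bma \xplus & X_+ \ema$, $\bar U = \bma u & U \ema$, together with the concatenated noise $\bar V = \bma v & V \ema$. The defining conditions of $\mc X_+(u)$ in \eqref{eq:x_feas}, namely $\xplus = Ax + Bu + v$ combined with \eqref{eq:matrix_form} and the bounds $A \in \mc L_L$ and $\bar V \in \mc L_{\alpha\sqrt{t+1}}$, then collapse into the single matrix identity
$$\bar X_+ - B \bar U = \bma A & \bar V \ema \bma \bar X \\ I_{t+1} \ema,$$
with $A \in \mc L_L$ and $\bar V \in \mc L_{\alpha\sqrt{t+1}}$.

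Next, I would apply Proposition~\ref{pr:multi_b_int} with $N=2$, $Y = \bar X_+ - B \bar U$, $Z_1 = \bar X$, $Z_2 = I_{t+1}$, $M_1 = A$, $M_2 = \bar V$, and bounds $\Gamma_1 = L^2 I_n$ and $\Gamma_2 = \alpha^2(t+1) I_{t+1}$. Any PSD completion of the corresponding block-diagonal template is parametrized by a single off-diagonal block $\pcomp$, and its positive semi-definiteness is exactly \eqref{eq:pred2}. A block expansion of the interpolation inequality $Y^\top Y \preceq Z^\top \hat G Z$ then yields \eqref{eq:pred3}: on the left, the partition $Y = \bma \xplus - Bu & X_+ - BU \ema$ gives the $2 \times 2$ block matrix shown; on the right, $Z^\top \hat G Z$ expands directly to $L^2 \bar X^\top \bar X + \bar X^\top \pcomp + \pcomp^\top \bar X + \alpha^2(t+1) I_{t+1}$.

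I do not anticipate a serious obstacle: once the augmentation is in place, the remainder is routine block-matrix algebra together with a direct invocation of Proposition~\ref{pr:multi_b_int}. The only step that requires care is to note that the bound in the definition of $\mc X_+(u)$ constrains the \emph{concatenated} noise $\bma v & V \ema$ rather than $v$ and $V$ separately, which is precisely what makes the reduction to the two-block form of Proposition~\ref{pr:multi_b_int} seamless.
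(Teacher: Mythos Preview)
Your proposal is correct and follows exactly the same route as the paper: augment the data with the new step $(x,\xplus,u)$, rewrite the combined identity as $\bar X_+ - B\bar U = \bma A & \bar V\ema \bma \bar X \\ I_{t+1}\ema$, and invoke Proposition~\ref{pr:multi_b_int} with the two blocks $M_1=A$, $M_2=\bar V$ and bounds $\Gamma_1=L^2 I_n$, $\Gamma_2=\alpha^2(t+1)I_{t+1}$, so that the PSD-completion condition becomes \eqref{eq:pred2} and the interpolation inequality $Y^\top Y\preceq Z^\top \boundcomp Z$ expands block-wise to \eqref{eq:pred3}. Your remark that the noise bound in \eqref{eq:x_feas} constrains the concatenated $\bma v & V\ema$ (rather than $v$ and $V$ separately) is exactly the point that makes the two-block application of Proposition~\ref{pr:multi_b_int} go through; note also that your expansion correctly produces an $L^2$ factor in front of $\bar X^\top \bar X$, which appears to have been dropped in the statement of \eqref{eq:pred3}.
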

\medskip
\begin{proof}
%	Statements $(1)$ and $(2)$ are equivalent by definition. To establish the equivalence between $(2)$ and $(3)$, 
We aim to characterize $\mc X_+(u)$ as defined in \eqref{eq:x_feas} in a data-based manner. Let  $\xplus=A\x+Bu+\noisev$. Then, this condition can be combined with \eqref{eq:matrix_form}, by writing the compact form %by writing
	\be\label{eq:matrix_form2}
	\bma \xplus-Bu&\dataxplus-B\datau\ema=\bma A&\bma \noisev&\noisem\ema  \ema \bma  \x&\datax \\   1&0_{1\times t}\\ 0_t &I_t\ema 
	\ee
Let $z$ and $Z$ be 
\be\label{eq:X_with_noise}
	\xinterp = \bma
	\x \\[2pt]
	1 \\[2pt]
	0_{t}
	\ema \,, \quad \Xinterp= \bma 
	\datax \\[2pt]
	0_{1\times t} \\[2pt]
	I_t
	\ema \,.
	\ee 
%as in \eqref{eq:X_with_noise}. 
Applying Proposition \ref{pr:multi_b_int} with  %, with $A^\top A\preceq \Abound$ and $\noiseplus \preceq (t+1)\alpha^2I_t$, and 
$$\begin{aligned}\bar{M}=&\bma A&\noiseplus \ema\,,\quad \bar{Z}=\bma \xinterp &\Xinterp\ema\,, \\
	&\bar{Y}=\bma \xplus-Bu&\dataxplus-B\datau\ema\,,\quad  
	\end{aligned}$$  and $\Gamma=\{\Abound,(t+1)\alpha^2I_t \}$ yields the result. %, . Finally, 
%Statement $(4)$ follows directly from the definition of PSD completion. %\xplus-Bu\,,\quad 
\end{proof}
\medskip
Theorem \ref{th1} gives an \textit{exact} characterization of the set of feasible points. %thus provides necessary and sufficient conditions for a state $\xplus$ to be consistent with prior knowledge and past data.
For any given $u$, verifying feasibility for a candidate 
$\xplus$ reduces to an SDP feasibility check, since the conditions are linear in $\pcomp$ (analogous to the verification problem in Section \ref{ss:ver}). On the other hand, optimizing over the entire set 
$\mc X_+(u)$, e.g.,  computing, for a given $u$, the worst-case scenario
$$
\begin{aligned}
\max_{\xplus\in \X_+(u)} &\xplus^\top \mc Q \xplus + u^\top Ru =\\ \max_{\xplus, \comp}\quad  &\xplus^\top Q \xplus + u^\top Ru\; \text{ s.t. }\; \eqref{eq:pred2}\,,\,\eqref{eq:pred3}
\end{aligned}
$$ is not an SDP problem, as the inequality in \eqref{eq:pred3} depends quadratically on $\xplus$. To gain more insights, we next show that this feasible set can be represented as a union of ellipsoids, each corresponding to a valid PSD completion. %{\red (add example)}
%Theorem \ref{th1} gives necessary and sufficient conditions for $\xplus$ to be a future state consistent with data and prior knowledge. For a given candidate $\xplus$ and a given control $u$, verifying the existence of a PSD completion is again SDP (linear in $F$). However solving problems over the entire set $\mc X_+(u)$ is not SDP even for a given $u$. Indeed, the LMI in \eqref{eq:pred1} is quadratic in $\xplus$. In the following, we provide a further characterization of the set $\mc X_+(u)$  that allows for faster algorithms and more precise understanding of such set. Indeed, we characterize such set as the union of ellipsoids.
\subsection{Ellipsoidal feasible sets}\label{ss:ell}
 %We begin by studying the case when there is no noise. In this case, we only have the bound $G=L^2I_n$ and we can directly apply Proposition \ref{pr:no_noise}. This leads to the following result.
 To gain intuition on the structure of feasible predictions, we first focus on the noise-free case. This setting allows us to apply directly Proposition \ref{pr:no_noise} (with $\Yinterp = \dataxplus -B\datau$, $\Xinterp =\datax$ and $G=\Abound$) and characterize the set of feasible next states as an ellipsoid. %isolate the geometric implications of the norm bound on $A$ By applying Proposition \ref{pr:no_noise}, we obtain the following explicit characterization.
\begin{corollary}\label{cor:no_noise}
Consider the data $\datax$, $\dataxplus$ and $\datau$, the bound $L>0$  and the initial condition $\x=x_t$. Let $\alpha=0$ and
	\be\label{eq:D_data}
	D:=L^2\datax^{\top} \datax-(\dataxplus-B\datau)^{\top} (\dataxplus-B\datau)
	\ee
	and assume that $\text{rank}(D)=\text{rank}(\datax)$. Then:
	\begin{enumerate}
		\item If $D\nsucceq 0$, $\mc X_+(u)=\emptyset\,$ (no feasible future states exist).
		\item  If $D\succeq 0$, for every \emph{initial condition} $\xinterp$ in $\R^m$, 
		\be\label{eq:ell_no_noise} 
		\hskip -0.6cm \mc  X_+(u)=\{\xplus \in \R^n\,\mid\,(\xplus-Bu-c)^{\top}\mc A(\xplus-Bu-c) \leq Q\}
		\ee
		where
		\be\label{eq:par_ell_no_noise}
		\begin{aligned}	\mc A &=  I_{n}+(\dataxplus-B\datau) D^{\dagger}(\dataxplus-B\datau)^{\top}\succ 0\\
			c & =(\dataxplus-B\datau)\datax^\dagger \x\\
			Q &=L^2x^{\top}(I_n-\datax D^\dagger D\datax^\dagger)\x\,.
		\end{aligned}
		\ee 
	\end{enumerate} 
\end{corollary}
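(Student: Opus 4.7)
The plan is to reduce Corollary~\ref{cor:no_noise} to a direct application of Proposition~\ref{pr:no_noise} by an affine change of variable that absorbs both the input term $Bu$ and the data term $B\datau$. Setting $\alpha=0$ in \eqref{eq:x_feas} forces $\noisev=0$ and $V=0$, leaving
\[
\mc X_+(u)=\{A\x+Bu : A\in \Aset,\; \dataxplus-B\datau=A\datax\}.
\]
With the substitution $\yinterp := \xplus - Bu$, one writes $\mc X_+(u) = Bu + \mc Y_+$, where $\mc Y_+$ is precisely the set appearing in Proposition~\ref{pr:no_noise} under the identifications $\Xinterp = \datax$, $\Yinterp = \dataxplus - B\datau$, $\xinterp = \x$, and $\bound = \Abound$.

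Under these identifications, the matrix $D$ of Proposition~\ref{pr:no_noise} coincides verbatim with \eqref{eq:D_data}, and the rank hypothesis transfers directly. Proposition~\ref{pr:no_noise} then yields both cases of the corollary: if $D\not\succeq 0$, then $\mc Y_+ = \emptyset$ and hence $\mc X_+(u)=\emptyset$; if $D\succeq 0$, then $\mc Y_+$ is an ellipsoid in the $\yinterp$ coordinate, and translating it back by $Bu$ produces \eqref{eq:ell_no_noise}. The formula $\mc A = I_n + (\dataxplus - B\datau) D^\dagger (\dataxplus - B\datau)^\top$ is an immediate substitution into Proposition~\ref{pr:no_noise}.

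It then remains to verify that the explicit expressions for the center $c = (\dataxplus - B\datau)\datax^\dagger \x$ and the size parameter $Q = L^2 \x^\top (I_n - \datax D^\dagger D \datax^\dagger)\x$ in \eqref{eq:par_ell_no_noise} match the general formulas in \eqref{eq:par_ell}. Intuitively, $c$ is the image of $\x$ under the minimum-norm interpolating matrix $(\dataxplus - B\datau)\datax^\dagger$, and $Q$ quantifies the energy budget $L^2\|\x\|^2$ remaining after discounting the component of $\x$ lying in the effective row span of the data. I expect this final identification to be the main obstacle: it should rest on the identity $D^\dagger D = \datax^\dagger \datax$, which holds precisely under the rank condition $\text{rank}(D)=\text{rank}(\datax)$, since this forces the (row and column) span of the symmetric matrix $D$ to coincide with the row span of $\datax$. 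This is exactly the algebraic ingredient that excludes the degenerate ``ellipsoid collapsing to a lower-dimensional set'' case flagged after Proposition~\ref{pr:no_noise}, and it guarantees that $Q$ is well defined and nonnegative for every admissible initial condition $\x$.
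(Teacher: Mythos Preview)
Your proposal is correct and follows exactly the paper's route: apply Proposition~\ref{pr:no_noise} with $\Xinterp=\datax$, $\Yinterp=\dataxplus-B\datau$, $\xinterp=\x$, and $\bound=L^2I_n$, then translate by $Bu$. The final identification you flag as ``the main obstacle'' is in fact immediate: since $\bound^{1/2}=LI_n$ is a scalar multiple of the identity, one has $(\bound^{1/2}\Xinterp)^\dagger\bound^{1/2}=\datax^\dagger$ and $\bound^{1/2}\Xinterp D^\dagger D(\bound^{1/2}\Xinterp)^\dagger=\datax D^\dagger D\datax^\dagger$, so \eqref{eq:par_ell} collapses to \eqref{eq:par_ell_no_noise} by pure substitution---the projector identity $D^\dagger D=\datax^\dagger\datax$ (which does hold under the rank hypothesis) is not needed.
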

\medskip

When the data are consistent with the physical assumption on the norm (i.e., $D\succeq 0$), the set of feasible states forms an ellipsoid with the parameters in \eqref{eq:par_ell_no_noise}.  %Similarly, the result extends to $G\succeq 0$, but we present it here in this simpler form that provides a more interpretative version of $\mc A$, $c$ and $Q$, as discussed in the following remark.
 %an exact characterization of all the feasibe points $\yinterp $ in $\R^n$ that can be obtained starting from the initial condition $\xinterp$ and can be found through a mapping $M$ that satisfies the bound condition and data feasibility. In particular, it proves that the set of all points coincide with an ellipsoid with the parameters in \eqref{eq:par_ell}. 
	Observe that $c=A^*x$ where $A^*=(\dataxplus-B\datau) \datax ^\dagger$ is the solution of the \textit{least-square} problem for $\dataxplus-B\datau =A\datax$. %, with $A$ unknown. %All the possible matrices consistent with the data are given by $M=M^*+\Delta(\Abound+XX^T)$. 
	All the possible matrices consistent with the data are given by $A=A^*+\Delta(I_m-\datax \datax^\dagger)$ for every $\Delta\in \R^{n\times m}$. 
	The further knowledge on the bound ($A^\top A\preceq L^2I_n$) allows us to reduce the set of feasible states at the next step to an ellipsoid. 
			\begin{remark}	
%	If we have no data except for the initial condition $x=x_0$, we can set
It is illustrative to look at $\mc X_+(u)$ in two extreme cases. First, if no data is available, i.e., $\datax=0_n$, $\dataxplus=0_n$ and $\datau=0_m$, we find the set of feasible states %coincides with %On the other hand, if $Z_0=0$ and $Z_1=1$, we obtain 
	$$
	\mc X_+(u)=\{ \xplus \in \R^n \mid (\xplus-Bu)^\top (\xplus-Bu) \leq L^2\x^\top \x\}\,,
	$$
	i.e.,	a ball centered in $Bu$ of radius $L\|x\|$. %, which is consistent with the norm bound.
	%We remark, that as long as the first condition in \eqref{eq:ell} is satisfied, the set of feasible points at the next step is an ellipsoid. %{\color{red} can we say when it's a nontrivial set?} %>the ellipsoid is always non degenerate.
	On the other hand, if 
%	$t=n$ and 
$\text{rank}(D)=\text{rank}(\datax)=n$, we find $c=(\dataxplus-BU) \datax ^{-1}\x$ and $Q=0$, which is consistent with uniquely identifying the state matrix $A^*=(\dataxplus-BU) \datax ^{-1}$. %that $\datax $ is invertible and
%	$$	c=(\dataxplus-BU) \datax ^{-1}\xinterp$$
%	which is consistent with $$\dataxplus-B\datau =A\datax \quad \Leftrightarrow \quad A=(\dataxplus-BU) \datax ^{-1}\,.$$ Also, we obtain $Q=0$, which is consistent with the fact that we can uniquely identify $A$. 
\end{remark}
 \begin{figure}
	\centering
\includegraphics[width=0.22\textwidth]{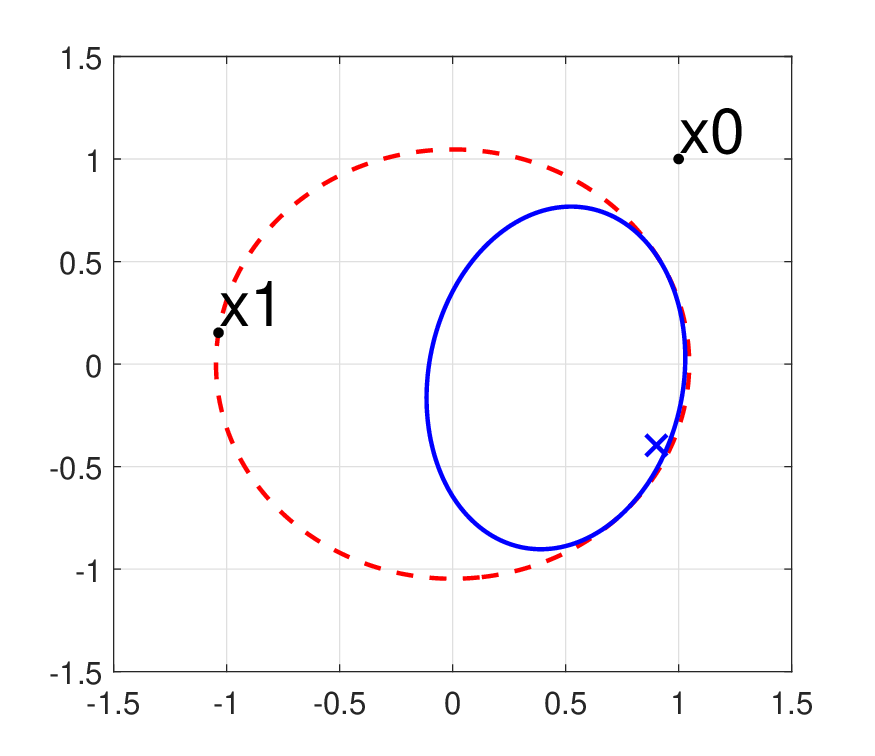}
\includegraphics[width=0.22\textwidth]{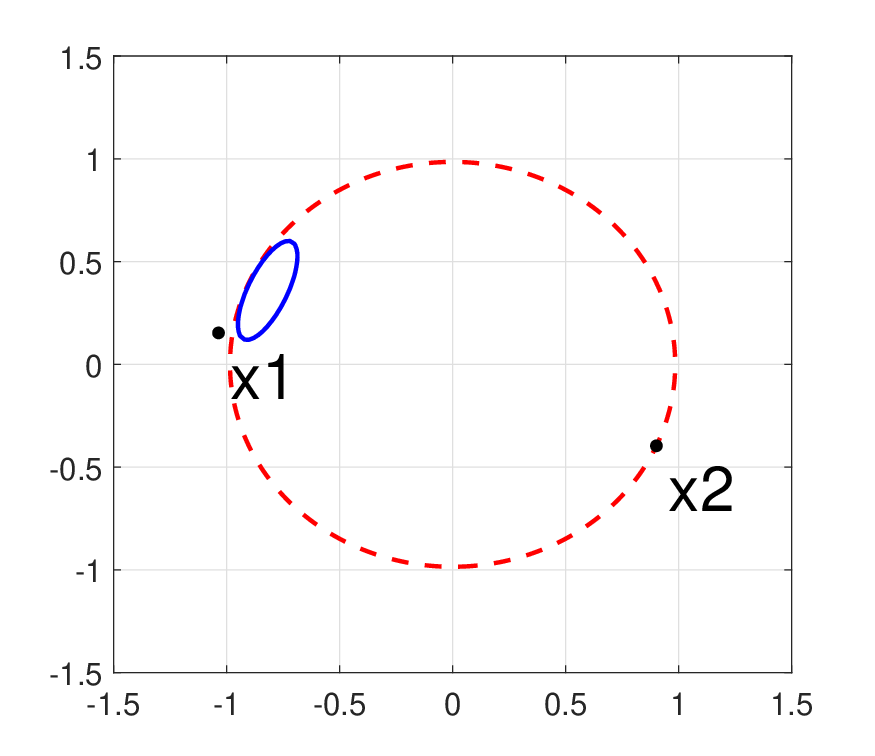}
	\caption{Representation of Example \ref{ex:ell_no_noise}}
    \label{fig:ell}
\end{figure}
\begin{example}\label{ex:ell_no_noise}
Let $n=2$ and assume that we have collected two data points $x_0$ and $\x_1$. For $L=1$, we find $D=\|x_0\|^2-\|x_1\|^2\,.$ Then, if $D<0$, the data are not consistent with the unit bound on $A$, leading to no feasible future states. On the other hand, if $D>0$, we find the ellipsoid in \eqref{eq:ell_no_noise} with $\mc A=I_n-  \frac{x_1x_1^\top}{\|x_0\|^2-\|x_1\|^2}$, $  c=\frac{x_0^\top x_1}{\|x_0\|^2}x_1$ and $Q=\frac{(x_0^\top x_1)^2}{\|x_0\|^2}$. %{\red (see if add formulas)}
	%$$	\mc A=I_n-  \frac{x_1x_1^\top}{\|x_0\|^2-\|x_1\|^2} \,,\;  c=\frac{x_0^\top x_1}{\|x_0\|^2}x_1	\,,\; Q=\frac{(x_0^\top x_1)^2}{\|x_0\|^2}\,.$$
Finally, if $D=0\,,$
%				In this case,
we have that $\text{rank}(D)=0<\text{rank}(\Xinterp)=1$ and Corollary \ref{cor:no_noise} does not apply. In this case, the set of feasible points is given by 
a segment (an ellipse in dimension 1). 
Consider now the states on the left-hand side of Figure \ref{fig:ell}, that is, $x_0=\bma 1&1\ema^\top$ and $x_1= \bma-1.04
&0.15\ema^\top$. The area inside the dashed red circle is the feasible set for $x_2$ knowing only the current state $x_1$ and the bound $L=1$. When also knowing $x_0$, the area reduces to the internal part of the ellipse shown in blue. The additional information of $x_0$ significantly reduces the area for the next point and changes the shape of the ellipse. 
Assume further that we observe the realization of $x_2=
\bma0.9&-0.4\ema^\top $ (depicted with a blue cross on the left). On the right hand side of Figure \ref{fig:ell}, 
we show the feasible sets for $x_3$  when only knowing $x_2$ (dotted red area) and when knowing $x_1$ and $x_2$ (blue ellipses). When knowing $x_0, x_1, x_2$ (in blue), the matrix can be uniquely determined and the one-step ahead prediction coincides with the point $x_3=\bma x_2&x_1\ema \bma x_1&x_0\ema^{-1}x_2=  \bma-0.75&    0.4\ema^\top $.   %with three measurements, the third set is just the exact prediction of the next point.
\end{example}
\medskip

%Thanks to this ellipsoidal characterization, in the absence of noise, we can solve safety problems and optimization problems, such as energy minimization, in a tractable manner. 
We can extend the result to the case when the noise is present by considering PSD matrix completions. Indeed, given the bounds $\alpha\geq 0$ and $L>0$, we find that $M=\bma A&\bma v&V\ema \ema$ is such that $M^\top M\preceq \boundcomp$ if and only if $\hat{G}$ is a PSD completion of $\bound_\blockbound$ with $\Gamma=\bma L^2I_n&\alpha^2(t+1)I_t\ema$. 
%Given  the data $\datax$, $\dataxplus$ and $\datau$ and the initial condition $\x=x_t$, let  $z$ and $Z$ be as in \eqref{eq:X_with_noise}.the initial condition $\x=x_t$.
Then, we can apply Proposition \ref{pr:no_noise}  with $Y=\dataxplus-B\datau$ and $\xinterp$ and $\Xinterp$ as in \eqref{eq:X_with_noise} repeatedly
 for each matrix completion $\boundcomp$. Then, we obtain	\be\label{eq:ell_un}
\mc \dataxplus(u)\supseteq	\hskip -0.9cm\bigcup_{\substack{\boundcomp \text{ compl. of } \bound_\blockbound\\ \Dcomp\succeq 0\,, \text{rank}(\Dcomp)=\text{rank}(\Xinterp)\,,}} \hskip -0.9cm\{ \xplus \mid (\xplus-{\blue Bu}-\ccomp)^\top \mc \Acomp (\yinterp-{\blue Bu}-\ccomp)\leq \Qcomp \}
\ee
where $
%\begin{aligned}
\Dcomp:= \Xinterp^\top \boundcomp \Xinterp -(\dataxplus-B\datau)^\top (\dataxplus-B\datau)\,,%\\&=\datax^\top \datax+(t+1)\alpha^2 I+F_2^\top \datax+\datax^\top F_2\\&-(\dataxplus-B\datau)^\top (\dataxplus-B\datau)\,.
%\end{aligned}
$
and
$$
\begin{aligned}
	\mc \Acomp &=  I_{n}+(\dataxplus-B\datau)\Dcomp^{\dagger}(\dataxplus-B\datau)^{\top}\succ 0		\\
	\ccomp & =(\dataxplus-B\datau)(\boundcomp^{\frac{1}{2}}\Xinterp)^\dagger \boundcomp^{\frac{1}{2}}\xinterp\\
	\Qcomp &=\xinterp^{\top}\boundcomp^{\frac{1}{2}}(\Abound-\boundcomp^{\frac{1}{2}}\Xinterp \Dcomp^\dagger \Dcomp(\boundcomp^\frac{1}{2}\Xinterp)^\dagger)\boundcomp^{\frac{1}{2}}\xinterp\,.
\end{aligned}
$$
We write the inclusion ($\supseteq$) instead of equality, since we formally omit the boundary cases where $\text{rank}(D)=\text{rank}(\Xinterp)$; however, by continuity of $D$ with respect to the bound, we conjecture that the two sets in fact coincide. %The union of these ellipsoids is not an ellipsoid, as shown in the following.
\begin{figure}
	\centering
	\includegraphics[width=0.22\textwidth]{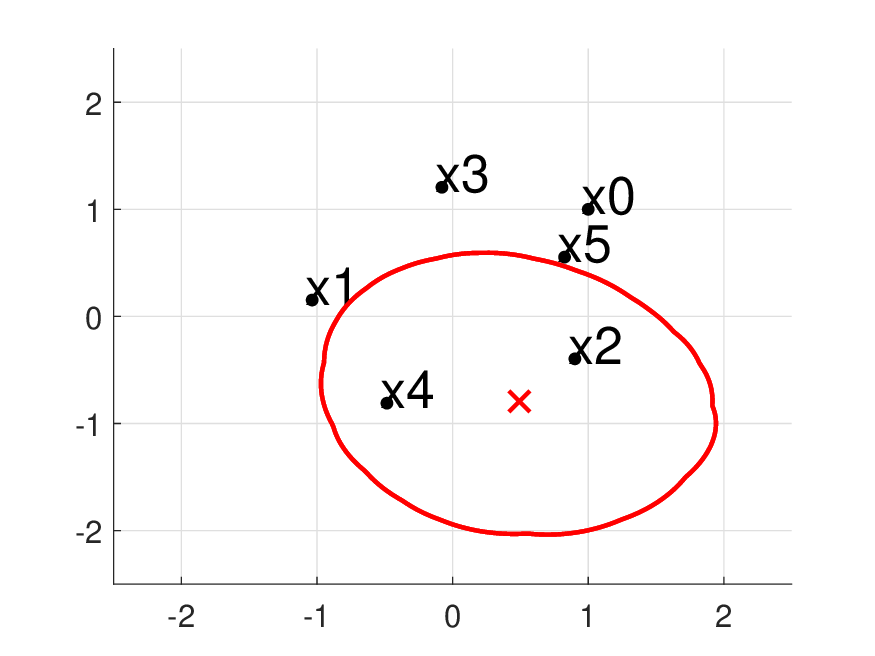}
    \includegraphics[width=0.22\textwidth]{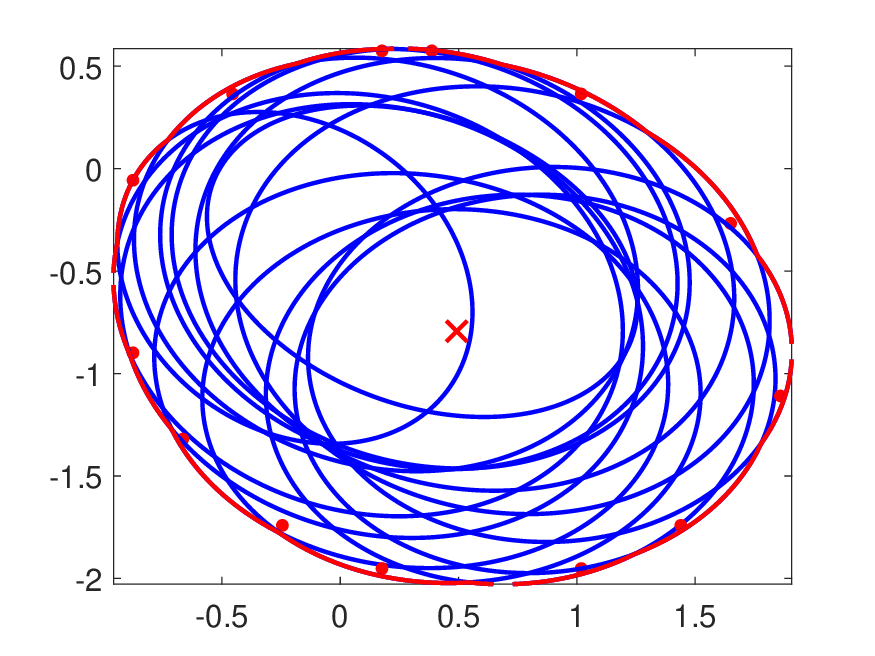}

\caption{Representation of Example \ref{ex:un_ell}}
\label{fig:un_ell}
\end{figure}
 \begin{example}\label{ex:un_ell}
Consider the data on the right of Fig. \ref{fig:datapoints}. The set of feasible states for $x_6$ for $L=1$ and $\alpha=0.32$ is plotted on the left of Fig. \ref{fig:un_ell} where the cross is the least-squares prediction. It is approximated as the union of several ellipsoids. Some of them are plotted on the right of Fig. \ref{fig:un_ell}. Observe that the set is convex and its shape suggests that it could be well over-approximated by an ellipsoid, which is the object of current work.
\end{example}
\medskip
%\nmargin{What is the example meant to show? the sentence before is "The union of these ellipsoids is not an ellipsoid, as shown in the following." The figure looks almost like an ellipsoid though. You want to say that: the union is not an ellipsoid, but  can be well over-approximated by an ellipsoid if there are sufficient data samples?}
%\nmargin{mayeb refer to the ellips. union eq.}
%Although the union of these ellipsoids is not itself ellipsoidal, t
The characterization in \eqref{eq:ell_un} offers a geometric interpretation of the feasible set at the subsequent state. Each feasible PSD completion corresponds to one admissible ellipsoid, and the union of these represents all future states consistent with the data and prior bounds. %{\blue Observe that the control $u$ only acts on the center of each ellipsoid by translating them all in the same manner.  } 
Moreover, safety and optimization problems can be formulated as subproblems over individual ellipsoids, enabling bilevel algorithms that alternate between searching for a matrix completion and optimizing over the corresponding ellipsoid. The exploration of these directions, as well as extensions to multi-step settings, constitutes the focus of ongoing and future research.
	\section{Conclusions}\label{sec:4}
	We derived interpolation conditions for noisy linear time-invariant systems with a known input matrix $B$, an unknown state-transition matrix $A$ with a bounded norm, and bounded process noise energy. These conditions enable a systematic characterization of all trajectories consistent with the measured data and prior bounds. %. We demonstrated the applicability of the proposed framework     Therefore, they can be exploited 
We demonstrated their applicability for data consistency verification, inference of admissible system behaviors, and one-step-ahead prediction, providing a foundation for potential applications in safety verification and worst-case energy minimization. 
This work represents a first step toward a broader framework, with future directions including extensions to multi-step predictions, the development of algorithms for control applications, and the exploration of more general assumptions in both linear (e.g., unknown $B$, alternative prior knowledge on $A$, general noise bounds, stochastic setting) and nonlinear models.

	%%%%%%%%%%%%%%%%%%%%%%%%%%%%%%%%%%%%%%%%%%%%%%%%%%%%%%%%%%%%%%%

		\bibliographystyle{ieeetr}
		\bibliography{bib}
		\appendix
		\subsection{Proof of Proposition \ref{pr:no_noise}}\label{app1}
		In the proof of Proposition \ref{pr:no_noise}, we will use twice the following Lemma.
		\begin{lemma}\label{lemma2}
			Let $D$ in \eqref{eq:D} 
			be such that $D\succeq 0$. Then, 	\be\label{eq:cond}(I_k-DD^\dagger)\Xinterp^{\top}=0\,.\ee  if and only if $\text{rank}(D)=\text{rank}(\Xinterp)$.
		\end{lemma}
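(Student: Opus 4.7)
My plan is to translate the identity $(I_k - DD^{\dagger})\Xinterp^{\top}=0$ into a statement about subspaces, and then exploit the PSD structure of $D$ together with its defining formula $D=\Xinterp^{\top}\bound\Xinterp-\Yinterp^{\top}\Yinterp$ to pin down $\mathrm{range}(D)$ in terms of $\mathrm{range}(\Xinterp^{\top})$.

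First, since $D$ is symmetric PSD, $DD^{\dagger}$ is the orthogonal projector onto $\mathrm{range}(D)$. Consequently, $(I_k-DD^{\dagger})\Xinterp^{\top}=0$ is equivalent to saying that every column of $\Xinterp^{\top}$ lies in $\mathrm{range}(D)$, i.e. $\mathrm{range}(\Xinterp^{\top})\subseteq\mathrm{range}(D)$.

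The crux of the argument is to establish the reverse inclusion $\mathrm{range}(D)\subseteq\mathrm{range}(\Xinterp^{\top})$, which I claim holds unconditionally under $D\succeq 0$. I would argue on kernels instead: for any $v\in\ker(\Xinterp)$,
$$
v^{\top}Dv \;=\; (\Xinterp v)^{\top}\bound(\Xinterp v) - \|\Yinterp v\|^{2} \;=\; -\|\Yinterp v\|^{2}\;\le\;0.
$$
Combined with $D\succeq 0$, this forces $\Yinterp v=0$, and then $Dv=\Xinterp^{\top}\bound\Xinterp v - \Yinterp^{\top}\Yinterp v = 0$. Hence $\ker(\Xinterp)\subseteq\ker(D)$, or equivalently $\mathrm{range}(D)\subseteq\mathrm{range}(\Xinterp^{\top})$.

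With this automatic inclusion in hand, the lemma follows by a dimension count. If the projector identity holds, then $\mathrm{range}(\Xinterp^{\top})\subseteq\mathrm{range}(D)$, and combined with the reverse inclusion this forces the two subspaces (and hence their dimensions) to coincide, giving $\mathrm{rank}(D)=\mathrm{rank}(\Xinterp)$. Conversely, if $\mathrm{rank}(D)=\mathrm{rank}(\Xinterp)$, the always-valid inclusion $\mathrm{range}(D)\subseteq\mathrm{range}(\Xinterp^{\top})$ between subspaces of equal dimension must be an equality, so $\mathrm{range}(\Xinterp^{\top})\subseteq\mathrm{range}(D)$ and the projector identity follows. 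The only step where real content enters is the PSD-based inclusion $\ker(\Xinterp)\subseteq\ker(D)$; this is the part I expect to be the main obstacle, as it is where both the assumption $D\succeq 0$ and the specific structure $D=\Xinterp^{\top}\bound\Xinterp-\Yinterp^{\top}\Yinterp$ are used simultaneously, while everything else is routine linear algebra.
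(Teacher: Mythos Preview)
Your proof is correct and takes a genuinely different route from the paper's. The paper first invokes Lemma~\ref{lemma3} and the Rouch\'e--Capelli theorem to reduce \eqref{eq:cond} to the rank identity $\mathrm{rank}(D)=\mathrm{rank}([D,\Xinterp^{\top}])$, and then uses Proposition~\ref{pr:1} (applied to the data $(\Xinterp,\Yinterp)$ themselves) to obtain a factorization $D=\Xinterp^{\top}(G-\hat M^{\top}\hat M)\Xinterp$, from which $\mathrm{rank}([D,\Xinterp^{\top}])=\mathrm{rank}(\Xinterp)$ follows. You bypass both Rouch\'e--Capelli and the call to Proposition~\ref{pr:1}: your kernel argument $v\in\ker(\Xinterp)\Rightarrow v^{\top}Dv=-\|\Yinterp v\|^{2}\le 0\Rightarrow Dv=0$ directly establishes $\mathrm{range}(D)\subseteq\mathrm{range}(\Xinterp^{\top})$, after which the equivalence is a pure dimension count. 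Your approach is more elementary and self-contained; the paper's approach, by contrast, makes the structural reason $\mathrm{rank}([D,\Xinterp^{\top}])=\mathrm{rank}(\Xinterp)$ visible through the factored form of $D$, which ties the lemma back to the interpolation framework.
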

		For the sake of completeness, we begin by proving the following Lemma.
		\begin{lemma}\label{lemma3}
			Let $v$ in $\R^{k\times m}$. Then, $(I_k-DD^\dagger)v=0$ if and only if   $\exists w\in \R^{k\times m}$ such that $v=Dw$ (i.e., $v \in \text{span}(D)$).
		\end{lemma}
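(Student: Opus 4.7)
The plan is to prove this as a direct two-line consequence of the defining Moore-Penrose properties of the pseudo-inverse, in particular the identity $DD^\dagger D = D$ and the fact that $DD^\dagger$ is the orthogonal projector onto the column range of $D$.

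For the ``if'' direction ($\Leftarrow$), I would suppose $v = Dw$ for some $w \in \R^{k\times m}$. Substituting into $(I_k - DD^\dagger)v$ and applying the identity $DD^\dagger D = D$ gives $(I_k - DD^\dagger)Dw = Dw - DD^\dagger D w = Dw - Dw = 0$, which is exactly what needs to be shown.

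For the ``only if'' direction ($\Rightarrow$), I would assume $(I_k - DD^\dagger)v = 0$, which rearranges immediately to $v = DD^\dagger v = D(D^\dagger v)$. Choosing $w := D^\dagger v \in \R^{k \times m}$ yields $v = Dw$, proving that $v$ lies in the column span of $D$.

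There is essentially no obstacle here: the lemma is a purely linear-algebraic restatement of the fact that $DD^\dagger$ is the orthogonal projector onto $\operatorname{range}(D)$, so the entire proof rests on one Moore-Penrose identity applied in each direction. The only thing worth being careful about is stating that $w = D^\dagger v$ has the correct dimensions, namely $\R^{k \times m}$, which follows since $D^\dagger \in \R^{k \times k}$ and $v \in \R^{k \times m}$.
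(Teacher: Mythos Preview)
Your proof is correct and essentially identical to the paper's: both directions rely on the single Moore--Penrose identity $DD^\dagger D = D$, with the forward direction taking $w := D^\dagger v$ and the backward direction substituting $v = Dw$ directly.
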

		\begin{proof}[Lemma \ref{lemma3}]
			($\Rightarrow$) $(I_k-DD^\dagger)v=0$ if and only if $v=DD^\dagger v$. Then, $v=Dw$ for $w=D^\dagger v$. \\
			($\Leftarrow$) Let $v=Dw$. Then, $(I_k-DD^\dagger)Dw=0$ if and only if $Dw=DD^\dagger Dw=Dw$. 
		\end{proof}
	\medskip
		\begin{proof}[Lemma \ref{lemma2}]
			By Lemma \ref{lemma3}, \eqref{eq:cond} holds true if and only there exists $w$ in $\R^k$ such that
			\be\label{eq:gen_lin_sys}
			Dw=\Xinterp^{\top}\,.
			\ee
				By the Rouché-Capelli Theorem, the system \eqref{eq:gen_lin_sys} admits at least one solution if and only if
			\be\label{eq:rou-cap}
			\text{rank}(D)=	\text{rank}([D,\Xinterp^{\top}])\,.%\overset{\eqref{eq:Db}}{=}\text{rank}(X^\top)\,.
			\ee
		Let $D$ be as in \eqref{eq:D} and such that $D\succeq 0$. %, we have $D=\Xinterp^\top G \Xinterp-\Yinterp^\top \Yinterp$. Furthermore, 
        By Proposition \ref{pr:1}, $\exists \hat M$ %in $\mc L_G$ 
			such that $\hat M^\top \hat M\preceq G$ and $Y=\hat M\Xinterp$.
		By substituting in \eqref{eq:D}, we find
			\be\label{eq:D2}
			D=\Xinterp^{\top}(G-\hat M^{\top}\hat M)\Xinterp\,.
			\ee
and			\be\label{eq:Db}
			\begin{aligned}
				\text{rank}(\big[D\,,\, \Xinterp^{\top}\big])&=\text{rank}(\Xinterp^{\top}\big[(G-\hat M^{\top}\hat M)\Xinterp,I_k\big])\\&= \text{rank}(\Xinterp)\,.
			\end{aligned}
			\ee
			Then, \eqref{eq:rou-cap} holds true if and only if $\text{rank}(\Xinterp)=\text{rank}(D)$. %and \eqref{eq:cond}.
			%	Then, by the Rouché-Capelli Theorem, the system $Dw=b$ admits at least one solution. 			This concludes the proof.
		\end{proof}
	\medskip
		\begin{proof}[Proposition \ref{pr:no_noise}]
			Let  $\bar Y=[y\,,\, Y]$ and $\bar \Xinterp=[\xinterp\,,\,\Xinterp]$.
			By Proposition \ref{pr:1},
			%we obtain that
			%$$
			$\exists M$ %\in \mc L_G \,:\,%
			such that $M^\top M\preceq G$ and $\bar Y=M \bar \Xinterp$ %\quad \Leftrightarrow \quad 
			if and only if $\bar Y^{\top} \bar Y\preceq \bar \Xinterp^{\top} G \bar \Xinterp$, that is,  %that $(\Xinterp, Y)$ are $\mc L_G$-interpolable if and only if
			$$
			%	\begin{aligned}
				\left(\begin{matrix}
					y^{\top}y&y^{\top}Y\\
				\ast& Y^{\top} Y
				\end{matrix}\right)\preceq 
				\left(\begin{matrix}
					\xinterp^{\top}\,G\xinterp &\xinterp^{\top}\,G\Xinterp\\
				\ast& \Xinterp^{\top}\,G\Xinterp
				\end{matrix}\right)
				%		\end{aligned}
			$$
			which is equivalent to
			\be \label{eq:1}
			\left(\begin{matrix}
				y^{\top}y- 	\xinterp^{\top}\,G \xinterp &y^{\top}Y-\xinterp^{\top}\,G\Xinterp\\
				\ast& Y^{\top} Y-\Xinterp^{\top} \,G\Xinterp
			\end{matrix}\right)\preceq0\,.
			\ee
			By the generalized Schur's complement, \eqref{eq:1} holds iff
			\be \label{eq:ell_schur}
			\begin{cases}
				D :=\Xinterp^{\top} G\Xinterp-Y^{\top} Y\succeq 0\\
				(I_k-DD^\dagger)B^{\top}=0\\
				y^{\top}y-\xinterp^{\top}G\xinterp+
				BD^{\dagger}B^{\top}\leq0%\end{aligned}
			\end{cases}
			\ee
			where $B=	y^{\top}Y-\xinterp^{\top}G\Xinterp$. 
			If $D\nsucceq 0$, the set of the feasible points at the next step is an empty set (leading to point $(1)$). 
	Let $D\succeq 0$. By Proposition \ref{pr:1}, there exists $\hat M$  such that  $\hat M^\top \hat M\preceq G$ and $Y=\hat M\Xinterp$.
			Then, for all $y$ in $\R^n$, 
			$$
			\begin{aligned}
				(I_k-DD^\dagger)&(Y^{\top} y- \Xinterp^{\top}G\, \xinterp)=\\&=(I_k-DD^\dagger)\Xinterp^{\top}(\hat M^{\top}y-G\xinterp)=0\,,
			\end{aligned}
			$$
			where the last equality holds true by Lemma \ref{lemma2} and the assumption that $\text{rank}(D)=\text{rank}(\Xinterp)$. 
			Then, 			the second equality is always satisfied and the set $\mc Y_+$ is entirely characterized by the last inequality. 	
		After some algebraic manipulations, we find that \eqref{eq:ell_schur} is equivalent to
			\be\label{eq:ell1_1}
			y^{\top}\mc A y-2\mc B^{\top}y+\mc C\leq  0
			\ee
			where
			\be\label{eq:ell_p0}
			\begin{aligned}
				\mc A &=  I_{n}+YD^{\dagger}Y^{\top};
				\\
				\mc B &= YD^{\dagger}\Xinterp^{\top}G \xinterp; \\
				\mc C &= \xinterp^{\top}G^\frac{1}{2}(G^\frac{1}{2}\Xinterp D^{\dagger}	\Xinterp^{\top}G^\frac{1}{2}-I_m)G^\frac{1}{2}\xinterp\,.%-L^2x(k)^{\top}x(k)-L^4x(k)^{\top}X_{0:k-1}D^{+}	X^{\top} x)\,.
			\end{aligned}
			\ee
			Since $D\succeq 0$, we obtain $D^{\dagger}\succeq 0$ (see Corollary 3 in \cite{lewis1968pseudoinverses}) and therefore  $\mc A=I_n+YD^{\dagger}Y^\top\succ 0$ for every $Y$. Then, \eqref{eq:ell1_1} defines an ellipsoid and  $\mc A$ is invertible. Furthermore, since by definition $D=D^{\top}$ is symmetric, we have that $(D^\dagger)^{\top}=D^\dagger$, $\mc A^{{\top}}=\mc A$ and $(\mc A^{-1})^{\top}=\mc A^{-1}$ are also symmetric. This observation, and some algebraic manipulations, shows that \eqref{eq:ell1_1} can be rewritten in the form \eqref{eq:ell}
			with $\mc A$ as in \eqref{eq:ell_p0} and
			\be\label{ell:ps}
		%	\begin{aligned}
			%	\mc A &=  I_{n}+YD^{\dagger}Y^{\top}\\
				c =\mc A^{-1}\mc B \,,\quad %=L^2\mc A^{-1}YD^{\dagger}\Xinterp^{\top} \Xinterp \\
				Q =\mc B^{\top} \mc A^{-1}\mc B-\mc C\,.
		%	\end{aligned}
			\ee
			{
			%	We now derive the expressions of $c$ and $Q$ in \eqref{eq:par_ell} from \eqref{ell:ps}.
			We further prove that the expression of $c$ and $Q$ in \eqref{ell:ps} can be equivalently rewritten as
					\be\label{eq:par_ell}
			\begin{aligned}
				%	\mc A &=  I_{n}+\Yinterp D^{\dagger}\Yinterp^{\top}\succ 0\\
				c & =\Yinterp(\bound^{\frac{1}{2}}\Xinterp)^\dagger \bound^{\frac{1}{2}}\xinterp\\
				Q &=\xinterp^{\top}\bound^{\frac{1}{2}}(I_m-\bound^{\frac{1}{2}}\Xinterp D^\dagger D(\bound^\frac{1}{2}\Xinterp)^\dagger)\bound^{\frac{1}{2}}\xinterp\,.
			\end{aligned}
			\ee
			We start by deriving the expression for $c$, that is, %which is equivalent to showing %showing that
				%In particular, we will show that
				%			I would really love to further show (or determine the conditions under which it holds) that
				\be\label{eq:simpl}
		%\mc A^{-1}B=
        (I_{n}+YD^\dagger Y^{\top})^{-1}YD^\dagger\Xinterp^{\top}G\xinterp=Y(G^{\frac{1}{2}}\Xinterp)^\dagger G^{\frac{1}{2}}\xinterp
				\ee
				Note that \eqref{eq:simpl} is true for every $\xinterp$ if and only if
				$$
				\begin{array}{lcl}
					YD^\dagger \Xinterp^{\top}G&=&(I_{n}+YD^\dagger Y^{\top})Y(G^{\frac{1}{2}}\Xinterp)^\dagger G^{\frac{1}{2}}\\
			%		&	\Leftrightarrow& \\
					%	L^2YD^\dagger \Xinterp^{\top}=&Y\Xinterp^\dagger +YD^\dagger Y^{\top}Y\Xinterp^\dagger  \\
			%		YD^\dagger \Xinterp^{\top}G 
			&=&Y(G^{\frac{1}{2}}\Xinterp)^\dagger G^{\frac{1}{2}} \\[2pt]&&+YD^\dagger(Y^{\top}Y-\Xinterp^{\top}G\Xinterp)(G^{\frac{1}{2}}\Xinterp)^\dagger G^{\frac{1}{2}} \\[2pt]&&+YD^\dagger \Xinterp^{\top}G\Xinterp(G^{\frac{1}{2}}\Xinterp)^\dagger G^{\frac{1}{2}}\\
					%		L^2YD^\dagger \Xinterp^{\top}&=&Y\Xinterp^\dagger -YD^\dagger D\Xinterp^\dagger\\&&+L^2YD^\dagger \Xinterp^{\top}\Xinterp\Xinterp^\dagger 
			%		&	\Leftrightarrow& \\
			%		YD^\dagger \Xinterp^{\top}G
					&=&Y(I_t-D^\dagger D)(G^{\frac{1}{2}}\Xinterp)^\dagger G^{\frac{1}{2}} \\[2pt]&&+YD^\dagger \Xinterp^{\top}G\Xinterp(G^{\frac{1}{2}}\Xinterp)^\dagger G^{\frac{1}{2}}\,.%\underbrace{Y(I_k-D^\dagger D)\Xinterp^\dagger }_{\overset{(\star)}=0}+YD^\dagger L^2\Xinterp^{\top}\Xinterp\Xinterp^\dagger
				\end{array}
				$$
				Since $D\succeq 0$, $D$ is an EP matrix, that is, $D^\dagger D=DD^\dagger$ (see Theorem 2 in \cite{lewis1968pseudoinverses}). Then, 
				$$
				\begin{aligned}
				Y(I_k&-D^\dagger D)(G^{\frac{1}{2}}\Xinterp)^\dagger G^{\frac{1}{2}} \\&=Y(I_k-DD^\dagger)\Xinterp^{\top}(\Xinterp^\dagger)^{\top} (G^{\frac{1}{2}}\Xinterp)^\dagger G^{\frac{1}{2}}\overset{(\star)}{=}0  
				\end{aligned}
				$$
				where $(\star)$ follows by Lemma \ref{lemma2}.  Then, \eqref{eq:simpl} holds true iff
				$$
				YD^\dagger \Xinterp^{\top}G =YD^\dagger \Xinterp^{\top}G\Xinterp(G^{\frac{1}{2}}\Xinterp)^\dagger G^{\frac{1}{2}} 
				$$
				which is always true by the properties of the pseudo-inverse: %since, by the properties of the Moore-Penrose pseudo-inverse, it holds
                \be\label{eq:pseud}
				\begin{aligned}
					\Xinterp^{\top}G&=(G^\frac{1}{2}\Xinterp)^{\top}G^\frac{1}{2}=(G^\frac{1}{2}\Xinterp(G^\frac{1}{2}\Xinterp)^\dagger G^\frac{1}{2}\Xinterp)^{\top}G^\frac{1}{2}\\
					%&=(G^\frac{1}{2}\Xinterp)^{\top}(G^\frac{1}{2}\Xinterp(G^\frac{1}{2}\Xinterp)^\dagger )^{\top}G^\frac{1}{2}\\
					&=(G^\frac{1}{2}\Xinterp)^{\top}G^\frac{1}{2}\Xinterp(G^\frac{1}{2}\Xinterp)^\dagger G^\frac{1}{2}=\Xinterp^{\top}G\Xinterp(G^\frac{1}{2}\Xinterp)^\dagger G^\frac{1}{2}\,.
				\end{aligned}
				\ee
                %Combining \eqref{eq:simpl} and \eqref{ell:ps}, we obtain $$			c=\mc A^{-1}\mc B=Y(G^{\frac{1}{2}}\Xinterp)^\dagger G^{\frac{1}{2}} \xinterp\,.$$
				Finally,  substituting the expression of $c=\mc A^{-1}\mc B$ in the definition of $Q$ in  \eqref{ell:ps} gives \eqref{eq:par_ell}. %$\xinterp^{\top}G(\Xinterp D^{\dagger}	\Xinterp^{\top}G-I_m)\xinterp$
				%$$			\begin{aligned}				\hskip-0.2cm	Q=&\mc C-\mc B^{\top} \mc A^{-1}\mc B\\=&\xinterp^{\top}G(\Xinterp D^{\dagger}	\Xinterp^{\top}G-I_m)\xinterp\\				&-(YD^{\dagger}\Xinterp^{\top}G \xinterp)^{\top}Y(G^{\frac{1}{2}}\Xinterp)^\dagger G^{\frac{1}{2}} \xinterp\\=&\xinterp^{\top}G(\Xinterp D^\dagger \Xinterp^{\top}G-\Xinterp D^{\dagger}Y^{\top}Y(G^{\frac{1}{2}}\Xinterp)^\dagger G^{\frac{1}{2}})\xinterp-\xinterp^{\top}G\xinterp\\	\overset{(*)}{=}&\xinterp^{\top}G(\Xinterp D^\dagger \Xinterp ^{\top}G\Xinterp (G^\frac{1}{2}\Xinterp )^\dagger G^\frac{1}{2} \\&-\Xinterp D^{\dagger}Y^{\top}Y(G^\frac{1}{2}\Xinterp )^\dagger G^\frac{1}{2})\xinterp-\xinterp^{\top}G\xinterp\\			=&\xinterp^{\top}G\Xinterp D^\dagger D(G^\frac{1}{2}\Xinterp )^\dagger G^\frac{1}{2}\xinterp-\xinterp^{\top}G\xinterp\\
					%=&-L^2x^{\top}x+L^2x^{\top}X	(X^{\top}X)^{-1}X^{\top}=\\=&\xinterp^{\top}G^{\frac{1}{2}}(G^{\frac{1}{2}}\Xinterp D^\dagger D(G^\frac{1}{2}\Xinterp)^\dagger-I_n)G^{\frac{1}{2}}\xinterp \,.				\end{aligned}	$$
%			where $(*)$ follows from \eqref{eq:pseud}. 
			This concludes the proof.}
		
	\end{proof}
%\subsection{Proof of Lemma \ref{lemma:multi_b}}\label{app2}

	\end{document}